\documentclass[10pt,journal,compsoc]{IEEEtran}


\usepackage[T1]{fontenc}
\usepackage[utf8]{inputenc}
\usepackage[T1]{fontenc}
\usepackage[english]{babel}
\usepackage{todonotes}
\usepackage{xcolor}

\usepackage{enumitem}
\usepackage{graphicx}
\usepackage{pgfplots}
\usepackage{caption}
\usepackage{mathtools}
\usepackage{colortbl}
\usepackage{placeins}
\usepackage{float}
\usepackage{showlabels}
\usepackage{multirow}
\usepackage{amsmath,amsthm}
\usepackage{ragged2e}
\usepackage{listings}
\usepackage{svg}
\usepackage{babel}
\usepackage[backend=bibtex, style=numeric,]{biblatex}
\usepackage{hyperref}
\usepackage[position=bottom]{subfig}
\usepackage{algorithm} 
\usepackage{algpseudocode}
\usepackage[normalem]{ulem}
\usepackage{mathtools}

\DeclarePairedDelimiter\floor{\lfloor}{\rfloor}
\usepackage{soul}
\setuldepth{Berlin}

\pagestyle{plain}
\usepgfplotslibrary{units}
\usepgfplotslibrary{ternary}
\date{}
\pgfplotsset{select coords between index/.style 2 args={
    x filter/.code={
        \ifnum\coordindex<#1\fi
        \ifnum\coordindex>#2\fi
    }
}}

\colorlet{colorLight1Blue}{blue!10}
\colorlet{colorLight1Green}{green!10}
\colorlet{colorLight1Orange}{orange!10}

\colorlet{colorLight3Blue}{blue!30}
\colorlet{colorLight3Green}{green!30}
\colorlet{colorLight3Orange}{orange!30}

\colorlet{colorLight5Blue}{blue!50}
\colorlet{colorLight5Green}{green!50}
\colorlet{colorLight5Orange}{orange!50}

\colorlet{colorLight7Blue}{blue!70}
\colorlet{colorLight7Green}{green!70}
\colorlet{colorLight7Orange}{orange!70}

\colorlet{colorLight9Blue}{blue!90}
\colorlet{colorLight9Green}{green!90}
\colorlet{colorLight9Orange}{orange!90}

\newtheorem{claim}{Claim}
\newtheorem{corollary}{Corollary}
\newtheorem{definition}{Definition}

\newcommand\Myperm[2][^n]{\prescript{#1\mkern-2.5mu}{}P_{#2}}

\addbibresource{main.bib}

\begin{document}
\title{\textbf{SISSLE In Consensus-Based Ripple}: \textbf{S}ome \textbf{I}mprovements In \textbf{S}peed, \textbf{S}ecurity and \textbf{L}ast Mil\textbf{E} Connectivity}

\author{Mayank Mundhra,
        Chester Rebeiro
\IEEEcompsocitemizethanks{\IEEEcompsocthanksitem M. Mundhra is from Indian Institute of Technology Madras, India.\protect\\
Email: mayank.mundhra.2012@gmail.com
\IEEEcompsocthanksitem C. Rebeiro is a faculty at Indian Institute of Technology Madras,  India.\protect\\
Email: chester@cse.iitm.ac.in}

\thanks{Manuscript received April 10, 2021; revised April 10, 2021.}}





\IEEEtitleabstractindextext{
\begin{abstract}
Cryptocurrencies are rapidly finding application in areas such as Real Time Gross Settlements and Payments. Ripple is a cryptocurrency that has gained prominence with banks and payment providers. It solves the Byzantine General's Problem with its Ripple Protocol Consensus Algorithm (RPCA), where each server maintains a list of servers, called the Unique Node List (UNL), that represents the network for that server and will not collectively defraud it. The server believes that the network has come to a consensus when servers on the UNL come to a consensus on a transaction. 

In this paper we improve Ripple to achieve better speed, security and last mile connectivity. We implement guidelines for resilience, robustness, improved security, and efficient information propagation (IP). We enhance the system to ensure that each server receives information from across the whole network rather than just from the UNL members. We introduce the paradigm of UNL overlap as a function of IP and the trust a server assigns to its own UNL. Our design makes it possible to identify and mitigate some malicious behaviours including attempts to fraudulently Double Spend or stall the system. We provide experimental evidence of the benefits of our approach over the current Ripple scheme. We observe $\geq 99.67\%$ reduction in opportunities for double spend attacks and censorship, $1.71x$ increase in fault tolerance to $\geq 34.21\%$ malicious nodes, $\geq 4.97x$ and $98.22x$ speedup and success rate for IP respectively, and $\geq 3.16x$ and $51.70x$ speedup and success rate in consensus respectively.
\end{abstract}
\begin{IEEEkeywords}
Ripple, Kelips, Consensus, Unique Node List, Information Propagation.
\end{IEEEkeywords}}

\maketitle
\IEEEdisplaynontitleabstractindextext
\IEEEpeerreviewmaketitle

\IEEEraisesectionheading{\section{Introduction}}

\IEEEPARstart{A} {\bf cryptocurrency} is a medium of exchange implemented as a distributed system. It uses blockchain technology to prevent spending of the same resource more than once (the Double Spending Problem). Transactions are stored in a chain that becomes increasingly resistant to modifications as new transactions are added. Some popular cryptocurrencies are Bitcoin~\cite{nakamoto2008bitcoin}, Ethereum~\cite{wood2014ethereum}, Ripple~\cite{schwartz2014ripple}, Tether~\cite{tetherWhitePaper}, Libra~\cite{librabanostate}, LiteCoin~\cite{liteCoinWhitePaper}, Monero~\cite{cryptonote2013v2}\cite{moneroResearchLab} and IOTA~\cite{popov2017iota}. They are used for Real Time Gross Settlements and Payments. 

\justify {\bf \vspace{-0.75mm} Cryptocurrencies, in general, face challenges} in achieving fast, secure and correct agreement on the validity of transactions (consensus) due to their distributed nature. They attempt to solve these using Proof-of-Work~\cite{nakamoto2008bitcoin}\cite{wood2014ethereum}, Proof-of-Stake~\cite{buterin2017casper}, Proof-of-Elapsed-Time~\cite{olson2018sawtooth}, Practical Byzantine Fault Tolerance~\cite{librabanostate}, and Proof-of-Authority mechanisms. However, some challenges may arise due to malicious or benign network issues like downtime and increased latencies. Additionally malicious nodes could throttle or manipulate information flow, or selectively share wrong information. These open the system to the Byzantine General's Problem~\cite{lamport1982byzantine} {\bf (BGP)} and resulting in attacks like Double Spending causing a blockchain fork, consensus delay, or halt. These negate the benefits of using blockchains.

{\flushleft \vspace{-0.75mm} Cryptocurrencies} need to solve these blockchain challenges {\bf to have widespread adoption and success}. They also need to be fast, reliable, highly secure, and resilient and achieve last mile connectivity. For these we look at one such cryptocurrency, {\bf Ripple}~\cite{rippleWebsite}\cite{schwartz2014ripple} that has gained popularity with banks and payment providers. It helps seamlessly and globally transfer value in seconds, much faster and more energy efficient than Bitcoin. These benefits and efforts to integrate with global financial systems make Ripple promising.

{\flushleft \vspace{-0.75mm} \bf Ripple} solves the BGP using its Ripple Protocol Consensus Algorithm (RPCA)~\cite{schwartz2014ripple}. Each server maintains its own Unique Node List (UNL), a list of servers which it believes will not collectively defraud it and represent the entire network. Servers need to maintain a minimum overlap between UNLs to ensure the correctness of consensus, and prevent network partitions {\bf (NPs)} and blockchain forks. This overlap defines the consensus threshold for votes (proposals).

{\flushleft \vspace{-0.75mm} \bf Limitations of Ripple:} Currently Ripple has some technical challenges. {\bf (1)} There are no guidelines or fool-proof ways to ensure the minimum UNL overlap. 
{\bf (2)} This overlap can be increased and consensus thresholds reduced, achieving consensus sooner without affecting security.
{\bf (3)} The relatively slow rate of information propagation affects consensus' convergence rate and can be exploited. It can be improved and be made more robust, resilient and efficient. 
{\bf (4)} There is no way to ensure that a server receives all the transactions in the network, let alone optimally. Connectivity (including last mile connectivity) and complete network exposure in the face of network issues and malicious behaviour is not assured. Finally,
{\bf (5)} a transaction that is set to undergo consensus (candidate transaction) propagates across the UNL to a node's immediate neighborhood via pull. It can take several consensus rounds for network-wide propagation.

\begin{table}[!t]
\vspace{-0.5mm}
\caption{\small Mapping: 'Limitations of Ripple' to 'Our Contribution'} \label{table:limitationContribution}
\centering{}%
\begin{tabular}{|c|c|c|c|c|c|}
\hline
 & \multicolumn{5}{c|}{Limitation} \tabularnewline
\cline{2-6}
Contribution & 1
& 2
& 3
& 4 & 5\tabularnewline
\hline
\rowcolor{colorLight1Blue}
Modified consensus algorithm & $\times$ & $\surd$ & $\surd$ & $\surd$ & $\surd$ \tabularnewline
\hline 
Guidelines for UNL & $\surd$ & $\surd$ & $\surd$ & $\surd$ & $\times$ \tabularnewline
\hline 
\rowcolor{colorLight1Blue}
UNL overlap & $\surd$ & $\surd$ & $\times$ & $\times$ & $\times$ \tabularnewline
\hline 
\end{tabular}
\vspace{-4mm}
\end{table}

{\flushleft \vspace{-0.75mm} \bf Our Contributions:} The above issues can compromise security, correctness of consensus, affect the rate and quality of information propagation and convergence to a consensus in the face of churn, attacks, network issues and other blockchain challenges. We present solutions to these challenges of Ripple, blockchain, and BGP. We bring about an increase in {\em speed, security and last mile connectivity}. We {\bf (a)} apply a modified consensus algorithm {\bf (MCA)} and address limitations 2, 3, 4 and 5, {\bf (b)} implement guidelines for UNLs and address limitations 1, 2, 3 and 4, {\bf (c)} introduce UNL overlap as a function of information propagation and reputation/ trust value of a node's UNL and address limitations 1 and 2. Table~\ref{table:limitationContribution} maps our contributions to limitations {\bf 1} to {\bf 5}.

{\flushleft \vspace{-0.75mm} \bf Our contributions bring the following impacts.} Our UNL design and MCA propagate information more freely and to the entire network. The UNL design ensures sufficient overlap and thus provable security~\cite{armknecht2015ripple}. It ensures that all nodes receive information from the entire network. We thus prevent attacks due to NPs, blockchain forks, consensus delay or halt, when all the nodes do not receive information or reach consensus. Thereby we tackle Double Spends, Throttling and Censorship, and BGP. Our UNL design also ensures high availability, churn resilience and last mile connectivity. It along with the concept of overlap helps reduce minimum consensus thresholds. Our contributions improve speed and reduce transaction time. They result in a fast, efficient, reliable, highly secure, and resilient system.

{\flushleft \vspace{-0.75mm} \bf Evaluation:} We simulate across multiple systems, consensus algorithms, and consensus, information and maximum shortest path modes. Adverse situations simulated include network issues and malicious nodes spread across the network, and/ or eclipsing a node. We observe that our approaches compared to RippleNet (having random UNLs) impact 
 {\bf (1)} Security:
$\geq 99.67\%$ reduction in double spend and censorship attack vectors.
$1.71x$ increase in fault tolerance to $\geq 34.21\%$ malicious nodes. 
{\bf (2)} Information propagation:
$\geq 4.97x$ speed up, 
$98.22x$ success rate network-wide,
$0.05x$ avg received messages (msgs),
$0.79x$ avg sent msgs,
$419.2x$ rate of having $\leq 3$ hops between source and destination.
{\bf (3)} Consensus: 
$\geq 3.16x$ speed up,
$51.70x$ success rate,
$0.42x$ avg received msgs,
$1.41x$ avg sent msgs.

{\flushleft \vspace{-0.75mm} \bf Organization:} This paper contains 6 sections. First, we briefly describe prior works -- Ripple and Kelips.  Then we list related works in Section \ref{sec:3}. In Section \ref{sec:4}, we  describe  our work -- UNL guidelines and overlay structure, and the modified consensus algorithm. We analyse information propagation and a security model. We introduce and analyse UNL overlap as a function of information propagation and trust value. We cover safety and liveness properties. Section \ref{sec:5} deals with experimental work, set up, analysis and interpretations. We then present the areas of our future work in Section~\ref{sec:futureWork} and conclude the paper in Section~\ref{sec:conclusion}.

\section{Background} \label{sec:background}
\vspace{-0.5mm}
We first describe elements of Ripple~\cite{schwartz2014ripple} and its components.

\vspace{-0.75mm}
\subsection{ Ripple \cite{armknecht2015ripple} \cite{rippleConsensusAdvanced} \cite{rippleTechTalk} \cite{schwartz2014ripple}}

{\vspace{-0.5mm} Each} {\bf server} running the Ripple Server software (implementing the RPCA) participates in consensus. It maintains a {\bf ledger} that records the amount of currency in each user's account and represents the network's "ground truth". It also maintains its own Unique Node List (UNL), a list of servers which it believes will not collectively defraud it and represent the entire network. Each individual UNL member is not required to be a trusted member for this. 

All nodes apply the RPCA  every few seconds to maintain the correctness of and to achieve consensus. Once consensus is reached, the current ledger is updated with transactions passing consensus and is considered closed. It becomes the {\bf last-closed ledger} and represents the current network state. We now describe the RPCA. A similar process is followed for validation post consensus.

{\flushleft \vspace{-1mm} \bf Ripple Consensus Algorithm (RPCA) \cite{rippleConsensusAdvanced}\cite{schwartz2014ripple}:}
The RPCA proceeds in rounds. In each round:
\begin{enumerate}
\item Each server takes valid transactions seen prior to the start of consensus, that it has not applied to the ledger. This includes new transactions initiated by the server's users and those pending from a previous consensus round. It makes them public as a list ({\bf candidate set}).
\item Each server amalgamates the candidate sets of servers in its UNL, and votes on the veracity of transactions by sending proposals (yes votes). 
\item The server only considers proposals from servers in its UNL. Transactions that receive more than a minimum percentage of yes votes are passed on to the next round while those that do not are either discarded, or included in the candidate set for the next consensus round.
\item The final consensus round requires a minimum $80\%$ of the UNL agreeing on a transaction. Consensus is reached on transactions passing this requirement.
\end{enumerate}

{\flushleft \vspace{-0.75mm} \bf Agreement and forking:}
Agreement ensures that all nodes agree to the same ledger version. This ensures absence of "forks" where more than one different version of the ledger exists for certain subsets of nodes. Presence of fork(s) results in the {\bf double spending problem}. After a formal analysis, the relation $w_{u,v} \geq 2(1-\rho)$~\cite{armknecht2015ripple} was identified {\em to ensure absence of forks}~\cite{armknecht2015ripple}. Here $w_{u,v}$ is the minimum UNL overlap between any two nodes and $\rho$ is the consensus threshold.

{\flushleft \vspace{-0.75mm} \bf Transaction ordering} for transactions is done using Account Sequence~\cite{ripple:accountSequence} for each account. This is a property of each account and incremented for each transaction. It is represented by the Sequence field in the transaction's fields~\cite{ripple:transactionCommonFields}.

In the paper we set up UNL guidelines. For this we use Kelips'~\cite{gupta2003kelips} overlay structure and efficient query routing.

\vspace{-0.75mm}
\subsection{Kelips \cite{cloudComputingConcepts} \cite{gupta2003kelips}}
{\vspace{-0.5mm} Kelips} is a peer-to-peer (p2p) Distributed Hash Table (DHT) with O(1) file look-up complexity. Consider a distributed system of $N$ nodes. The nodes are divided into $\sqrt{N}$ buckets (affinity groups) of size $\sqrt{N}$ each. Each node is hashed to a unique affinity group. It stores socket address of nodes in its affinity group in a list and those of $w$ nodes each for other affinity groups in another. This provides alternate routes for information to propagate in the face of faults and failures.

{\flushleft \vspace{-0.75mm} \bf Why Kelips:} Popular DHTs such as Kelips~\cite{gupta2003kelips}, Kademlia~\cite{maymounkov2002kademlia}, Content Addressable Network (CAN)~\cite{ratnasamy2001can}, Pastry~\cite{rowstron2001pastry}, Chord~\cite{stoica2001chord},  Tapestry~\cite{zhao2004tapestry} have lookup complexities of $O(1)$, $O(log(n))$, $O(n^{1/d})$, $O(log(n))$,  $O(log_{2}(n))$, $O((b-1)\times log_b(n))$ respectively. Kelips based information propagation for Ripple would propagate information and achieve consensus sooner. It would be churn resilient and tolerate faults and failures. Kelips' slightly higher memory overhead is negligible per today's standards. Thus Kelips.

\section{Related Works} \label{sec:3}
{\flushleft \vspace{-0.75mm} \bf Bitcoin and DNS seeds:} Bitcoin clients rely on Domain Name Server (DNS) seeds instead of DHTs for the initial peer selection (fresh or after long disconnection). The seeds resolve to a list of IPs of running nodes. If this fails, they fallback to a hard-coded list of IPs pointing to stable nodes. The previous list is used for subsequent re-connections.

{\flushleft \vspace{-0.75mm} \bf Ethereum and Kademlia:} Ethereum~\cite{wood2014ethereum} and its spawned cryptocurrencies use Kademlia~\cite{maymounkov2002kademlia} for peer selection and message passing. We notice some challenges. {\bf (1)} The inefficiency due to Kademlia's $O(log(n))$ look up complexity cascades to message passing and voting for future  PoS~\cite{buterin2017casper} implementation. It is worsened by several round trips. {\bf (2)} The distance between peers is virtual and not per actual network topology. Thus systems and messages are not connected efficiently. Thus performance is affected due to inefficiencies in message passing, dissemination and voting, and with scale (with at least $O(log(n))$ complexities). {\bf (3)} Bootstrapping the peer table with one or a few peers is inefficient. It can be circumvented by starting with sufficient peers. Protocols to maintain the peer table too should be given adequate attention. These areas can be vulnerable to table poisoning, overflow attacks, and eclipse attacks~\cite{ethereumAttacksSurvey}\cite{eclipseEthereumFalseFriendsAttack}\cite{lowresourceEclipseEthereumAttack}. {\em Our proposal is different from the existing Kademlia solution.} Since Ethereum uses PoW and later PoS, Kademlia is used strictly for peer selection and message passing. We use Kelips, a more appropriate DHT, not just for peer selection and message passing but also for systematizing the UNL for consensus voting and improving UNL overlap. We provide improvements in speed, security and last mile connectivity.

{\flushleft \vspace{-0.75mm} \bf BitTorrent and Mainline DHT:} BitTorrent clients~\cite{BitTorrentWebsite}\cite{uTorrentWebsite}\cite{TransmissionWebsite}\cite{rTorrentWebsite} use Mainline DHT~\cite{wang2013mainlinedht}, a Kademlia-based~\cite{maymounkov2002kademlia} DHT, for peer selection.

{\flushleft \vspace{-0.75mm} \bf IOTA and auto-peering:} IOTA does not have automated peer selection. Post Coordicide~\cite{popov2020coordicide}, it is working to incorporate auto-peering and peer discovery~\cite{popov2020coordicide} using DNS seeding like process for bootstrap and ping-pong for liveness.

\section{Our Work: SISSLE In Consensus - Ripple} \label{sec:4}
We highlight properties of distributed systems and blockchains below. We also highlight our impact.

{\flushleft \vspace{-1mm} From} CAP theorem~\cite{brewer2000CAP}, blockchains can achieve two of the three: scale, security, and decentralization~\cite{medium:scalabilityTrilemma}. We improve scale and security at negligible cost to decentralisation.

Distributed systems trade-off between safety and liveness. Safety guarantees “a bad thing never happens”. Liveness guarantees “something good eventually happens”. Ripple demonstrates liveness using its simulator~\cite{simCode}. Safety (no forks or double spends {\bf [DSs]}) has been analysed previously~\cite{armknecht2015ripple}. Ripple's implementation has no fool-proof mechanism to ensure safety. We improve it's safety and liveness.

\subsection{\vspace{-0.75mm} A Brief Summary of Our Contributions}
We present our work improving speed, security and last mile connectivity for Ripple and other blockchains.

{\flushleft \vspace{-0.75mm} \bf (1)} We introduce peer-to-peer (p2p) inspired network overlays, present a modified version of the UNL, and introduce a new list, the Trustee Node List (TNL).

{\flushleft \vspace{-0.75mm} \bf (2)} We present our modified consensus algorithm {\bf (MCA)} where information propagates more freely. Propagation is to the UNL and TNL via push (all purposes). The existing algorithm's information propagation is more restricted and only to the UNL. It is via pull for the generation of transaction sets to be voted on and via push for consensus voting.

{\flushleft \vspace{-0.75mm} \bf (3)} We show the soundness and benefits of our approach. We formally analyse and list possible {\bf information propagation (IP)} paths and create a security model. We list the possible malicious behaviours and some solutions to them. Using proofs and experiments we show that information propagates in 3 Hops across the entire network as long as - a) each node has at least one genuine and non-faulty node in its UNL or TNL, b) our approaches (Kelips-like overlay and MCA) are implemented, and, c) in the presence of $<(c+1)\times(\sqrt{N}-1)$ faults. We show how our proposed approaches improve last mile connectivity, and mitigate network partitions (NPs), DS attacks and censorship. We connect IP, UNL overlap and consensus thresholds.

{\flushleft \vspace{-0.75mm} \bf (4)} We analyse our approach's impact on safety and liveness.

These contributions solve Ripple's limitations in an overlapping fashion as described earlier in Table~\ref{table:limitationContribution}

\vspace{-2.75mm}
\subsection{P2P inspired overlay for trust and consensus} \label{subsec:5A}
{\flushleft \vspace{-0.75mm} \bf Guidelines and overlay structure for UNL:} Consider a Kelips-like~\cite{gupta2003kelips} network overlay with $N$ nodes running the MCA (described immediately after). These are divided into $\sqrt{N}$ affinity groups, each of size $\sqrt{N}$.

We propose guidelines for trust lists, UNL and TNL, maintained by individual servers. Trust lists contain all nodes connected to that server. A server's UNL contains two parts: \textbf{UNL-A} or a list of socket addresses of nodes belonging to its affinity group. Its size is $\sqrt{N}-1$. \textbf{UNL-B} or a list of socket addresses of $c$ nodes from each of the $\sqrt{N}-1$ affinity groups other than its own (foreign affinity groups). Its size is $c*(\sqrt{N}-1)$ where ideally $c \geq 3$. 

A server's Trustee Node List (TNL) contains all servers which have it in their UNL. They depend on its messages towards candidate set generation and consensus. Our guidelines are implementable manually or as a dynamic UNL.

\vspace{-2.75mm}
\subsection{Modified Consensus Algorithm (MCA):}
{\vspace{-0.75mm} We} propose slight modifications to the RPCA ~\cite{schwartz2014ripple} in IP and consensus thresholds. For all other purposes, the algorithm is the same.
Each server takes the following steps during each consensus round $i$:

{\flushleft \vspace{-0.5mm} \bf Stage 1: Candidate Set Generation}

{\flushleft \vspace{-0.75mm} \bf Sub-round A:}  The server adds valid transactions to its candidate set. These include transactions pending from previous consensus rounds and fresh ones from its users.

{\flushleft \vspace{-0.75mm} \bf Sub-round B:} The server declares its candidate set to servers in its UNL (and optionally TNL) once.

{\flushleft \vspace{-0.75mm} \bf Sub-round C:} 
{\em Outbound:} The server forwards candidate sets received from other servers as a set of sets every $x$ seconds. Each candidate set is forwarded once.

{\flushleft \vspace{-0.75mm} \em Inbound:} The server vets and assimilates transactions in candidate sets of servers in its UNL (and optionally TNL). It may also consider those of other servers.

{\flushleft \vspace{-0.75mm} \em Duration:} Sub-round C runs for $y$ seconds optimised for visibility of most or all of the network's transactions. 
If some servers are unable to have network-wide exposure, their inputs are considered in the next consensus round. If this is a large majority, consensus wouldn't pass. Then $y$ needs tweaking. Our multi-hop\footnote{In this paper, we imply 'hops' to be vertex hops.} overlay relays information indirectly providing this exposure.

{\flushleft \vspace{-0.75mm} \em Note A:} Transactions involve a processing fee with greater priority to higher fee. Transactions are propagated in the descending order of this fee for constrained bandwidths.

{\flushleft  \vspace{-0.75mm} \em Note B:} Attempts to flood the network with transactions would erode the attacker's wealth if legitimate transactions have a lower processing fee. This is the best penalty. It might inhibit processing for legitimate users and force them to pay greater fee. AI agents can be used to prevent such attacks.

{\flushleft \vspace{-0.75mm} \em Note C:} Network-wide IP takes multiple consensus rounds currently and one round in our approach. (1). Our approach's mandatory wait time ($y$) is lesser than the sum of current wait times across multiple consensus rounds. (2). Entire consensus rounds (candidate set generation and voting) spent waiting for network-wide transaction receipt are saved. Consensus is sped up.

{\flushleft  \vspace{-0.5mm} \bf Stage 2: Consensus Voting}

{\flushleft \vspace{-1mm} This stage has several sub-rounds. In each sub-round:}
\begin{itemize}
    \item The server sends its proposals (yes votes) to servers in its TNL (and optionally UNL).
    \item The server forwards other servers' proposals to servers in its UNL and TNL once per proposal and in sets.
    \item The absolute threshold for mathematical certainty decreases after each sub-round with greater IP and thus network overlap.
    \item The threshold for inclusion of transactions to the next sub-round increases with each sub-round, as in the RPCA~\cite{schwartz2014ripple}. This threshold need not be more than the absolute threshold for mathematical certainty.
    \item Only proposals (votes) from the UNL are considered.
\end{itemize}

\vspace{-0.75mm}
\subsection{Formal Analysis: Information Propagation (IP), Security Model, UNL Overlap and Consensus Threshold} \label{subsec:formalAnalysis}

We analyse IP and security in our approach. We connect UNL overlap to IP and the percentage of non-byzantine nodes in the UNL. It extends to consensus thresholds for provable security and achieving mathematical certainty of the absence of forks~\cite{armknecht2015ripple}.

\begin{table}[!t]
\caption{A classification of propagation paths paths with source and destination in same and separate affinity groups.} \label{table:pathsSameAffinityGroup}
\centering{}%
\begin{tabular}{|c|c|c|c|}
\hline 
Path
& \begin{tabular}{@{}c@{}}No. of\\Hops\end{tabular}
& \begin{tabular}{@{}c@{}}No. of\\Paths\end{tabular}  
& P(X) \tabularnewline
\hline
\rowcolor{colorLight1Blue}
\multicolumn{4}{|c|}{\bf Same affinity group} \tabularnewline
\hline 
$A_s \rightarrow A_d$ & $1$ & $1$  & $1$ \tabularnewline
\hline 
\rowcolor{colorLight1Blue}
$A_s \rightarrow A_i\rightarrow A_d$ & $2$ & $\floor*{\sqrt{N}-2}$  & $1$ \tabularnewline
\hline 
$A_s \rightarrow B_i \rightarrow A_d$  & $2$ & - & $<1$ \tabularnewline
\hline 
\rowcolor{colorLight1Blue}
$A_s \rightarrow B_i \rightarrow B_j \rightarrow A_d$  & $3$ & $\geq c^2\times(\floor*{\sqrt{N}-1})$  & $1$ \tabularnewline
\hline 
$A_s \rightarrow B_i \rightarrow C_j \rightarrow A_d$  & $3$ & -  & $<1$ \tabularnewline
\hline 
\rowcolor{colorLight1Blue}
$A_s \rightarrow B_i \rightarrow A_j \rightarrow A_d$  & $3$ & $\geq c\times(c-1)$  & $1$ \tabularnewline
\hline 
\begin{tabular}{@{}c@{}}$A_s \rightarrow A_1\rightarrow \ldots$ \\ $\rightarrow A_l \rightarrow A_d$\end{tabular}  & $l$ & $\Myperm[\floor*{\sqrt{N}-2}]{l}$  & $1$ \tabularnewline
\hline
\rowcolor{colorLight1Blue}
\multicolumn{4}{|c|}{\bf Separate affinity groups} \tabularnewline
\hline 
$A_s \rightarrow B_d$ & $1$ & $1$  & $1$ \tabularnewline
\hline
\rowcolor{colorLight1Blue}
$A_s \rightarrow A_j \rightarrow B_d$ & $2$ & $\geq c$  & $1$ \tabularnewline
\hline
$A_s \rightarrow B_i \rightarrow B_d$ & $2$ & $\geq c$  & $1$ \tabularnewline
\hline
\rowcolor{colorLight1Blue}
$A_s \rightarrow C_i \rightarrow B_d$ & $2$ & -  & $<1$ \tabularnewline
\hline
$A_s \rightarrow B_1 \rightarrow B_2 \rightarrow B_d$ & $3$ & $\geq c\times (\floor*{\sqrt{N}-2})$  & $1$ \tabularnewline
\hline
\rowcolor{colorLight1Blue}
$A_s \rightarrow A_i \rightarrow B_j \rightarrow B_d$ & $3$ & $\geq c \times (\floor*{\sqrt{N}-1})$  & $1$ \tabularnewline
\hline
$A_s \rightarrow C_i \rightarrow C_j \rightarrow B_d$ & $3$ & $\geq c^2 \times (\floor*{\sqrt{N}-2})$  & $1$ \tabularnewline
\hline
\rowcolor{colorLight1Blue}
$A_s \rightarrow C_i \rightarrow B_k \rightarrow B_d$ & $3$ & $\geq c^2 \times (\floor*{\sqrt{N}-2})$  & $1$ \tabularnewline
\hline
$A_s \rightarrow C_i \rightarrow A_j \rightarrow B_d$ & $3$ & -  & $<1$ \tabularnewline
\hline
\rowcolor{colorLight1Blue}
\begin{tabular}{@{}c@{}}$A_s \rightarrow C_i \rightarrow C_j$ \\ $\rightarrow A_k\rightarrow B_d$\end{tabular}  
 & $4$ & $\geq c^3 \times (\floor*{\sqrt{N}-2})$  & $1$ \tabularnewline
\hline
\begin{tabular}{@{}c@{}}$A_s \rightarrow C_i \rightarrow C_j$ \\$\rightarrow B_k\rightarrow B_d$\end{tabular}   & $4$ & \begin{tabular}{@{}c@{}}$\geq c^2 \times (\floor*{\sqrt{N}-1})$ \\ $\times (\floor*{\sqrt{N-2}})$\end{tabular} & $1$ \tabularnewline
\hline
\rowcolor{colorLight1Blue}
\begin{tabular}{@{}c@{}}$A_s \rightarrow B_1 \rightarrow \ldots$ \\ $\rightarrow B_l \rightarrow B_d$ \end{tabular}  
& $l+1$ & $\geq c\times \Myperm[\floor*{\sqrt{N}-2}]{l}$  & $1$ \tabularnewline
\hline
\end{tabular}
\vspace{-6mm}
\end{table}

{\flushleft  \vspace{-0.75mm} \bf A.) \ul{IP:}} For IP we implement our approaches in a Kelips-like overlay. In our 3 Hop Claim we claim that there is a path of length 3 hops connecting any two nodes (maximum $2$ degrees of separation\footnote{We define degree of separation as the number of nodes between source and destination nodes in a path.}) in this overlay having a minimum of $< (c+1) \times \sqrt{N}-1$ faults. Towards this claim we list paths of length $\leq 3$ hops, the total number of such paths and their probability. We then build a security model for Ripple including vulnerabilities and their solutions. Next we prove this claim formally and experimentally. We then present benefits in last mile connectivity and mitigation of NPs, DSs, blockchain forks and censorship.

{\flushleft  \vspace{-0.75mm} \em Formal listing of paths for IP:} We present the same in  Table~\ref{table:pathsSameAffinityGroup}. A generic representation of these paths is provided for two cases: when the source and destination node's affinity group is the same and when it is different. Notations: $A$, $B$ and $C$ are different affinity groups. Nodes in an affinity group have unique subscripts. Subscripts: $s$ and $d$ are for source and destination nodes; $i$, $j$ and $k$ are for different and random nodes; $1$ to $l$ are individual nodes. P(X) is the probability of the particular path's existence.

Factors in number of paths estimation: $\bullet$ $c$ is when a node accesses a node in its UNL or TNL but not in its affinity group. $\bullet$ $-$ is for paths having probability less than $1$. Their estimation is complex. $\bullet$ $\sqrt{N}$ is when a node accesses one of the $\sqrt{N}-1$ nodes in its own affinity group or one of the other $\sqrt{N}-1$ affinity groups. $\bullet$ $\floor*{}$ is floor function. $\bullet$ $\geq$ as $c$ accounts only for nodes in the UNL and not the TNL.

Number of paths of length $\leq 3$ between source (S) and destination (D) nodes: $\geq (c^2 \times \sqrt{N}-1) + (c\times(c-1))$ paths when S and D in same affinity group. $\geq 2\times c \times (c+1) \times \sqrt{N}-2$ paths when S and D in different affinity groups.   

{\flushleft \vspace{-0.5mm} \bf B.) \ul{Security Model:}} Several vulnerabilities in cryptocurrencies have been documented~\cite{armknecht2015ripple}\cite{informalSurvey:apriorit}\cite{ethereumAttacksSurvey}\cite{chen2020survey}\cite{conti2018survey}\cite{hasanova2019survey}\cite{lin2017survey}\cite{sayeed2018effectiveness}.  Given this paper's domain, we focus on vulnerabilities at the network and protocol level and not account or client level.  Some of these apply to Ripple: {\bf(1) Direct:} DS Attack~\cite{karame2012two}\cite{rosenfeld2014analysis}, race attack~\cite{bitcoinit:irreversibleTransactions}, 51$\%$ attack (Sybil~\cite{douceur2002sybil} or Bribery~\cite{bonneau2016buy}), DDoS~\cite{johnson2014game}\cite{vasek2014empirical}, transaction flooding, Eclipse attack~\cite{heilman2015eclipse}\cite{eclipseEthereumFalseFriendsAttack}\cite{lowresourceEclipseEthereumAttack}, network partitioning, routing attack~\cite{apostolaki2017hijacking}\cite{igiglobal:routingAttack}, censorship~\cite{censorshipDetectors}\cite{ethblog:problemCensorship}, {\bf(2) Indirect:} Proof-of-Work (PoW) based attacks (Finney~\cite{bitcointalk:finneyAttack}\cite{stackexchange:finneyAttack}, Vector76~\cite{bitcointalk:vector76}\cite{reddit:vector76Attack} and Brute Force~\cite{conti2018survey}), Proof-of-Stake (PoS) based attack (Nothing-At-Stake~\cite{ethwiki:PoSFAQ}), low voter turnout exploit~\cite{vitalikca:NotesOnBlockchainGovernance}, $51\%$ attack (Goldfinger~\cite{extance2015future}\cite{ethwiki:PoSFAQ}\cite{ren2014proof}), attacks at scale~\cite{steemit:DPOSConsensusAlgorithmMissingWhitePaper}\cite{steemit:ResponseToCosmosWhitePaperOnDPOSSecurity}, block witholding~\cite{bag2016bitcoin}\cite{rosenfeld2011analysisPooled}\cite{tosh2017security}, fork after witholding (FAW)~\cite{kwon2017selfish}, time jacking~\cite{conti2018survey}, {\bf (3) Less likely:} transaction malleability~\cite{andrychowicz2015malleability}\cite{stock2009walowdac}, tampering~\cite{conti2018survey}, {\bf (4) Inapplicable:} selfish mining~\cite{atzei2017survey}, time base and long range attacks~\cite{atzei2017survey}. We are not aware of peer reviewed works on Ripple's security beyond~\cite{armknecht2015ripple}.

{\flushleft \vspace{-0.75mm} {\bf B.1.)} \em Vulnerability overview in a Ripple like setting (from above and more):} A) Protocol level: 1) nodes tampering messages, or 2) nodes sending erroneous or malformed transactions, proposals and last closed ledgers (LCLs). B) Network level: 1) nodes throttling or dropping information and then leveraging its effects, 2) DDoS and transaction flooding by nodes or 3) Border Gateway Protocol (BGP) hijack. C) Other: 1) 51$\%$ attack (Sybil or Bribery), 2) inadequate UNL overlap or poor UNL configuration, 3) attacks at scale~\cite{steemit:DPOSConsensusAlgorithmMissingWhitePaper}\cite{steemit:ResponseToCosmosWhitePaperOnDPOSSecurity} 4) real world transaction execution on partial digital confirmation, and subsequent digital transaction failure. 5) Transaction flooding by accounts is at protocol and account level.

{\flushleft  \vspace{-0.75mm} {\bf B.2.)} \em We now rule out some attack vectors.} 1) Public key cryptography prevents {\em malicious nodes from tampering messages or generating transactions} (transaction generation by accounts). This prevents message tampering and transaction flooding.
2) Priority is given to transactions with higher processing fee. Thus {\em transaction flooding} can only raise this fee. This attack is economically infeasible as such transactions erode the attacker's wealth if processed. This attack is at an account level and out of the scope of this security model. 
3) {\em Partial confirmation based attacks} can be prevented by waiting for the Last Validated Ledger (LVL).
Finally, 4) messages sent by nodes are cryptographically signed. It is possible to trace {\em malformed messages and erroneous LCLs} to nodes. The community (currently Ripple Labs, Inc.) can identify and analyse such nodes' behaviour and choose not to trust them~\cite{rippleTechTalk}. Thus nodes behaving so do it at their own detriment. We call this solution by Ripple as {\bf Solution A}. Information to identify such behaviour may not be visible network-wide. Our approaches ({\bf Solution B}) solve this. 

{\flushleft \vspace{-0.75mm} {\bf B.3.)} \em Some vulnerabilities in detail:} Vulnerabilities are denoted (V$x$) and their solutions (S$x$) where $x$ is a number

{\flushleft \vspace{-1mm} \bf (V1.)} Poor UNL design causes poor overlap, DSs, forks~\cite{armknecht2015ripple}.

{\flushleft  \vspace{-1mm} \bf (V2.)} is a node not declaring some proposals in a consensus sub-round but declaring them in the next sub-round. It is an anomaly\footnote{Evident from point 3 of RPCA}. It can be malicious or benign\footnote{Throughout this paper benign implies non-malicious and genuine} due to network issues like temporary downtime and increased connection latencies. Currently, there is no way to differentiate between these at a node level or to mitigate the {\em benign challenges}. Due to Byzantine possibilities Solution A may face difficulty in identifying malicious nodes exhibiting V2.

{\flushleft  \vspace{-1mm} \bf (V3.)} is declaration of an erroneous LCL\footnote{The LCL does not have the same weight as the LVL. This vulnerability's impact is relatively negligible.}. It can be malicious or benign due to network issues like temporary downtime and increased connection latencies and with the distributed nature of consensus affecting candidate set generation and proposals. Currently, there is no way to mitigate the benign aspects of V3. Solution A may find it difficult to distinguish between nodes exhibiting V3 maliciously or benignly.    

{\flushleft \vspace{-1mm} Nodes can maliciously throttle or drop messages.} {\bf (V4.)} is malicious nodes throttling or dropping messages between two parts of the network. {\bf (V5.)} is malicious nodes sending different messages to two different parts of the network and throttling or dropping them across the two parts. The worst case in these is malicious dropping of all the messages. V4 and V5 can {\em censor}, cause {\em NPs} and {\em DS Attacks}. Eclipse attacks and slower IP worsen it.

{\flushleft \vspace{-1mm} \bf (V6.)} is a node declaring proposals for transactions and not including them in its LCL. It can be benign (due to consensus failure on the transaction) or malicious. {\em Currently, there is no way to differentiate between these.}

{\flushleft \vspace{-0.75mm} {\bf B.4.)} \em Out of scope of this model:} 1) 51$\%$ attack. 2) Attacks at scale~\cite{steemit:DPOSConsensusAlgorithmMissingWhitePaper}\cite{steemit:ResponseToCosmosWhitePaperOnDPOSSecurity}. 3) Human error in UNL configuration (not poor design). 4) BGP hijack 5) Vulnerability 6 (V6).

{\flushleft \vspace{-0.75mm} {\bf B.5.)} \bf Adversary model:} V1 is a design vulnerability. V2-V5 are possible malicious behaviours a node can demonstrate at the network and protocol layer. Consider adversaries exhibiting behaviour V2-V5 and freely choosing their UNL.

{\flushleft \vspace{-0.75mm} {\bf B.6.)} \bf Security Claims:} 

{\flushleft \vspace{-1mm} \bf \ul{(S1.)}} Our UNL guidelines mitigate V1.

{\flushleft \vspace{-1mm} \bf \ul{(Base arguments, S4 and S5.)}} Identification and mitigation of V4 and V5 happens if messages reach the entire network. Solution B propagates messages to the entire network if the node is able to send the information to atleast one non-malicious node as proven in the 3 Hop Claim. This mitigates V4 and V5. Messages reach faster and to $100\%$ of the nodes in far more severe cases compared to the existing system. Thus nodes have visibility of each node's messages and sooner. Last mile connectivity is ensured.  NPs, blockchain forks, DS attacks and censorship are mitigated. 

{\flushleft  \vspace{-1mm} \bf \ul{(S2, S3, and Solution A.)}} From base arguments, Solution B mitigates benign aspects of V2 and V3. It becomes easier to identify malicious aspects of V2 and V3 via Solution A. In the absence of Solution B, this identification would have taken longer, may still be uncertain and difficult to prove. Nodes could have claimed benign network and consensus challenges to escape being labelled malicious. 

Thus, {\em Solution B aids in the mitigation of V2, V3, V4 and V5 and increases the efficacy of Solution A.}

{\flushleft \vspace{-0.75mm} \bf B.7.) Security Proofs: \ul{(S1.)}} The cumulative strength of our approaches including UNL guidelines to counter V1 shall be demonstrated across all the sections upto Section~\ref{sec:conclusion}. This includes $100\%$ IP to all the nodes in far more severe circumstances and sooner, last mile connectivity, IP speed up, mitigation of NPs, blockchain forks, DS attacks and censorship, mitigation of V2-V5, increased and provable UNL overlap, reduced consensus thresholds, consensus speed up, improved safety and liveness, increased tolerance to faults and malicious nodes.

{\flushleft \vspace{-1mm} \bf \ul{(Base arguments, S4 and S5.)}} We solve vulnerabilities at the network layer. The $3$ Hop Claim, its corollaries and arguments solve V4 and V5. For ease of explanation consider one half of the network as one node.

\begin{claim} [$3$ Hop Claim] \label{claim:3hopClaim} \vspace{-0.5mm} A node $X$ under attack by nodes trying to throttle or drop messages to and fro the rest of the network can receive and transmit messages across the whole network within $3$ hops as long as it has at least one genuine and non-faulty node $Y$ in its UNL or TNL, our approach (Kelips-like overlay and MCA) is used, and in the presence of $<(c+1)\times(\sqrt{N}-1)$ faults.
\end{claim}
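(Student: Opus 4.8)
The plan is to fix the guaranteed genuine neighbour $Y$ as a relay and reduce the two-way (transmit and receive) statement to a one-way reachability statement. Because the UNL/TNL relationship is symmetric — if $B$ lies in $A$'s UNL then $A$ lies in $B$'s TNL, and a node forwards along both lists — transmission and reception are mirror images, so it suffices to show that $X$ can push any message to an arbitrary genuine node $Z$ within three hops. The first hop is forced: $X \to Y$ with $Y$ genuine and non-faulty (this edge exists and cannot be throttled, since both endpoints honour it). Hence the whole problem collapses to: a fully functional genuine node $Y$ must reach an arbitrary genuine $Z$ in at most two further hops, and I would prove this by a case analysis on whether $Z$ lies in $Y$'s affinity group.

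The same-affinity-group case is immediate: $Z$ lies in the UNL-A of $Y$, so $Y \to Z$ is a single genuine edge that no set of external faults can block, and the total is two hops. The case where $Z$ sits in a different group $G$ is the substantive one, and here I would read off the guaranteed relay families from Table~\ref{table:pathsSameAffinityGroup}: the $c$ bridge nodes that $Y$ holds in $G$ (each reaching $Z$ through its own UNL-A), the $c$ nodes that $Z$ places in $Y$'s group (each forwarding to $Z$ through its TNL, and reachable from $Y$ through UNL-A), and the longer $\geq c^2 \times (\sqrt{N}-2)$ and $\geq c \times (\sqrt{N}-1)$ three-hop families. Each such relay is the unique internal vertex of a short path, so distinct relays give internally node-disjoint routes, and a message is blocked only if every one of these relay vertices is faulty.

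The heart of the argument is then a counting/Menger step: I would lower-bound the number of internally node-disjoint $\leq 3$-hop routes from $X$ (through $Y$) to $Z$ and show that cutting all of them forces the adversary to corrupt at least $(c+1)(\sqrt{N}-1)$ nodes — exactly the full UNL-degree $(\sqrt{N}-1) + c(\sqrt{N}-1)$ of a node. Equivalently, I would argue that to stop $Z$ from ever receiving the flooded message the adversary must make the entire $(c+1)(\sqrt{N}-1)$-node neighbourhood of $Z$ faulty, which contradicts the hypothesis of fewer than $(c+1)(\sqrt{N}-1)$ faults; since $X$'s own neighbourhood is likewise not fully corrupt, the genuine node $Y$ is guaranteed on the source side, and the reverse (reception) direction then follows by the symmetry noted above.

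I expect the main obstacle to be the distance bound in the cross-group case rather than mere reachability. Non-isolation of $Z$ only guarantees that the message eventually arrives through the genuine subgraph; pinning arrival to three hops requires that at least one of the \emph{short} relay families survives the faults, and because all genuine propagation out of $X$ is funnelled through the single node $Y$, the set of relays adjacent to both $Y$ and $Z$ can be as small as $O(c)$ in the worst case. Reconciling this small guaranteed common neighbourhood with the much larger tolerance $(c+1)(\sqrt{N}-1)$ — that is, verifying that the enumerated node-disjoint families in Table~\ref{table:pathsSameAffinityGroup} genuinely meet the claimed counts and stay non-faulty under any admissible placement of faults — is the step I would scrutinise most carefully, and it is where I would either tighten the fault bound or invoke the randomised (Kelips) choice of UNL-B to argue that the relay set is large with high probability.
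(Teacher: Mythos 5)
Your reduction (bi-directionality of paths, the forced first hop $X \to Y$, and the case split on whether the far node shares an affinity group) is exactly the skeleton of the paper's proof, and your same-group and two-hop cases match it. The gap is in the step you yourself flag as the heart of the argument: the Menger-style counting cannot deliver the claim as stated. Because every genuine route out of $X$ is funnelled through the single node $Y$, the internally disjoint $\leq 3$-hop routes to a cross-group node $Z$ correspond to common neighbours of $Y$ and $Z$, and the only such neighbours the overlay guarantees are the $c$ members of $Y$'s UNL-B in $Z$'s group together with the $c$ members of $Z$'s UNL-B in $Y$'s group --- about $2c$ nodes in all (the longer families you cite from Table~\ref{table:pathsSameAffinityGroup} are three-hop families measured from the source, so appended after the hop $X \to Y$ they exceed the budget and add nothing new). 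Hence an adversary needs only $O(c)$ well-placed faults, not $(c+1)\times(\sqrt{N}-1)$, to sever every guaranteed short route; your equivalence ``blocking $Z$ forever requires corrupting $Z$'s entire neighbourhood'' is true for \emph{eventual} delivery but not for delivery \emph{within three hops}, and conflating the two is where the argument breaks. Your closing paragraph concedes this, but the remedies you propose (tightening the fault bound, or a with-high-probability statement via randomised UNL-B selection) prove a different statement, not the claim.

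The paper does not attempt any such counting. After deleting the malicious (throttling) nodes from the graph, it invokes a structural invariant of the dynamically maintained Kelips-like overlay: the UNL of the genuine node $Y$ contains a genuine member $N_i$ of \emph{every} affinity group, in particular of the affinity group of the far node $N$. This immediately yields the path $N \to N_i \to Y \to X$ of length $3$; the fault bound is used only to ensure $Y$ exists at all, since $(c+1)\times(\sqrt{N}-1)$ is a node's full degree and hence the cost of a total eclipse. Whether that invariant is itself airtight under worst-case fault placement is a question one can raise against the paper, but it is the paper's route; to salvage your quantitative approach you would have to argue that the UNL-maintenance mechanism replenishes $Y$'s UNL-B entries in $Z$'s group from live genuine nodes, which is precisely what the paper's invariant asserts without counting.
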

\begin{proof} \vspace{-1.5mm} Consider a system implementing our approaches (Kelips-like overlay and MCA). Due to the MCA, IP is freer and faster. Note, that a message path is bi-directional. 

For proving the claim, it would suffice to prove $\exists$ a path $P$ containing genuine nodes between a node $N$ of the network and $X$ via $Y$ (where $Y$ is the only genuine and non-faulty node connected to $X$) and its length $\leq 3$.

Malicious nodes are throttling or dropping messages. We assume that there is no path via them. We eliminate them. All nodes now discussed are genuine and non-faulty.

{\flushleft \vspace{-1mm} Case 1:} If $N$ is in the UNL of $X$, then $N$ is connected to $X$. This means $N$ is genuine and connected to $X$ $\Rightarrow$ $N$ is $Y$. $\Rightarrow \exists$ a path $N\rightarrow X$ of length $1$.

{\flushleft \vspace{-1mm} Case 2:} If $N$ is not in the UNL of $X$, then $N$ is not in the affinity group of $X$. Then two possibilities exist: (a): $N$ is in the UNL of $Y$ $\Rightarrow \exists$ a path $N\rightarrow Y\rightarrow X$ of length $2$. (b): $N$ is not in the UNL of $Y$. Given the Kelips-like overlay $\exists$ a node $N_i$ in affinity group of $N$ and in the UNL of $Y$. $\Rightarrow \exists$ a path $N\rightarrow N_i \rightarrow Y \rightarrow X$ of length $3$.

The presence of $<(c+1)\times(\sqrt{N}-1)$ faults ensures that no node is fully eclipsed by malicious nodes and that there is one genuine and non-faulty node in each UNL.
\end{proof}

\begin{table}[!t]
\caption{\label{3HopStats} Statistics associated with $3$ hop claim}
\centering{}%
\begin{tabular}{|c|c|c|c|c|c|c|}
\hline 
$N=256$
& \multicolumn{3}{c|} {\begin{tabular}{@{}c@{}}Malicious Nodes \\Eclipsing the\\ Node\end{tabular}}
& \multicolumn{3}{c|}{\begin{tabular}{@{}c@{}}Malicious Nodes \\Randomly Throughout\\ The Network\end{tabular}}
\tabularnewline
\hline 
$c=2$ 
& \begin{tabular}{@{}c@{}}Avg\\dist\end{tabular}
& \begin{tabular}{@{}c@{}}Avg\\$\%$ mal\end{tabular}
& \begin{tabular}{@{}c@{}}Max\\dist\end{tabular}
& \begin{tabular}{@{}c@{}}Avg\\dist\end{tabular}
& \begin{tabular}{@{}c@{}}Avg\\$\%$ mal\end{tabular}
& \begin{tabular}{@{}c@{}}Max\\dist\end{tabular}
\tabularnewline
\hline 
\rowcolor{colorLight1Blue}
SimC & $3.24$ & $7.36$  & $4$ & $3$ & $17.19$ & $3$ \tabularnewline
\hline
SimRM & $2$ & $17.19$  & $3$ & $2$ & $17.19$ & $2$ \tabularnewline
\hline
\rowcolor{colorLight1Blue}
SimK & $2$ & $17.19$  & $3$ & $2$ & $17.19$ & $2$ \tabularnewline
\hline
\end{tabular}
\vspace{-6mm}
\end{table}

{\bf \vspace{-1.5mm} We reinforce this claim experimentally (ie theory with application).} Experiments are run on a $256$ node network over $10000$ seeded cases. We measure the maximum of shortest distances {\bf (MSD)} between nodes and the source of messages (transaction/ proposal). We do so over Ripple's (SimC) overlay, a scaled up version of Ripple (SimRM) with connections $10\%$ greater than a Kelips-like overlay, and our approach (SimK). We present it in table \ref{3HopStats}. The simulation is across two patterns of cases: malicious nodes eclipsing a node, and malicious nodes distributed randomly and evenly in the network. The no. of malicious nodes $<(c+1)\times(\sqrt{N}-1)$. In simulations, malicious nodes drop messages. If successful, all genuine nodes receive messages. Experimentally we observe MSD between nodes and the source reaches a maximum of $3$ for SimK and $4$ for SimC. Messages reach all nodes when the avg no of malicious nodes $<(c+1)\times(\sqrt{N}-1)$. Observation: MSD $>3$ for only SimC in $1175/5000$ cases and when a node is eclipsed by all but one malicious node.

MSD $\leq 3$ with $100\%$ success\footnote{this metric including distance criteria is also known as $Success_2$} for SimK, if 
$\bullet$ total percentage of malicious nodes in the network $\leq 80$ (for $c=2$) and malicious nodes distributed randomly, or $\bullet$ total percentage of malicious nodes $\leq 28$ (for $c=2$) and any particular node eclipsed with all but one connection being malicious.

As a consequence of Claim~\ref{claim:3hopClaim} it is safe to conclude:

\begin{corollary} \label{corollary:100infoprop} \vspace{-0.5mm} It is possible to achieve $100\%$ IP within a max 3 hops so long as each non-malicious node is connected to atleast one non-malicious node, our approaches are implemented and faults $<(c+1)\times(\sqrt{N}-1)$.
\end{corollary}

{\vspace{-1mm} In practice}  $100\%$ IP is achieved in 3 hops in far more severe cases as demonstrated by above experiment. It will be reinforced later in Simulation section. {\em Claim~\ref{claim:3hopClaim} and Corollary~\ref{corollary:100infoprop} mitigate V4 and V5.} They lead to the following benefits:

{\flushleft \vspace{-1mm} \bf Last mile connectivity:} Building upon Claim~\ref{claim:3hopClaim}, we can say: if a server with poor connectivity and at a last mile location can connect to one other non-faulty server, its message gets relayed to the entire network in $3$ hops. The system will reach consensus on that transaction. The server can then receive the LVL. This server acts as a relay and receives the LVL. We ensure last mile connectivity and increase the graph's connectedness. This was not guaranteed before.

{\flushleft \vspace{-1mm} \bf IP speed up:} Our approach provides $100\%$ IP and IP speed up. There are multiple IP routes. Several routes with the lowest latencies help messages reach faster. This is non-redundant and has a limited effect (positive or negative) on message complexity compared to the existing system. 

We now build Claim~\ref{claim:DoubleSpendAttack}. We leverage it and previous claims to solve some key blockchain challenges.

\begin{claim} \label{claim:DoubleSpendAttack} \vspace{-0.5mm} DS Attacks arise due to information asymmetry and can be countered with information availability
\end{claim}
\begin{proof} \vspace{-1.5mm} Consider an account with $100$ credits. Consider two payment transactions: $T1$ of $90$ credits to Alice and $T2$ of $90$ credits to Bob. A malicious actor Mallory can DS so long as transactions $T1$ and $T2$ can be sent to two different NPs $A$ and $B$ respectively and segregation of transactions is maintained across partitions. 

This can be countered if both the partitions $A$ and $B$ receive both the transactions $T1$ and $T2$.

If transaction ordering can be established, then only one transaction passes. If both have the same sequencing, an anomaly can be detected and handled (say, reject both).
\end{proof}

{\vspace{-1mm} DSs} can happen when the Account Sequence~\cite{ripple:accountSequence} is the same (Client-level vulnerability) and with insufficient overlap. At best consensus will fail and the attack goes undetected. At worst DS happens. Our approach makes it possible to detect and handle this during candidate set generation.

{\flushleft \vspace{-1mm} \bf Mitigating NP, blockchain fork, DS attack and censorship:} Information asymmetry and NPs give rise to blockchain forks and DS attacks (Claim~\ref{claim:DoubleSpendAttack}). Every case where all the nodes do not receive $100\%$ of the information (complete network exposure) correlates to this. It is an opening for {\em DS attacks or censorship}. It can arise due to benign network issues (like downtime and increased connection latencies) or V4 and V5. It is possible in the current Ripple system (RippleNet). Our approaches mitigate this and provide complete network exposure in far worse scenarios. $100\%$ IP to all the nodes in three hops has been demonstrated in Claim~\ref{claim:3hopClaim}, experimentally and in Corollary~\ref{corollary:100infoprop}. It will be reinforced in Simulations. Our approaches provide visibility and reach this objective in far worse scenarios than RippleNet. 

{\flushleft \vspace{-1mm} \bf \ul{(S2, S3 and Solution A.)}} Increased network exposure and speed provides the network's information to every node. It prevents or reduces benign network issues explained earlier. This mitigates benign aspects of V2 and V3. It provides visibility of duplicitous and, or malicious information witholding behaviours  throughout the network and helps in tracking them while obfuscating\footnote{Design choices like not sharing TTL protect connectivity data} sensitive information.

Solution B makes it possible to analyze behaviours for lesser time with lesser errors. Fewer nodes are needed as they receive all or more data vs. earlier cases of partial data receipt. Detecting nodes launching some attacks becomes easier. Some factors that bred uncertainty get removed. Accuracy of flagging becomes easier to prove. Errors in flagging reduce. Chances of a non-malicious node facing benign network issues or being under attack reduce. Malicious nodes can hide lesser behind these issues to escape flagging.

{\flushleft \vspace{-0.75mm} \bf C.) \ul{Information propagation (IP), UNL overlap and consensus} \ul{threshold:}} From Claim~\ref{claim:3hopClaim} if a node can send out a message, then the whole network will receive it in $3$ hops. If the node is unable to do so and is eclipsed, then the network will not receive it. Upon identification of this situation via censorship detection~\cite{censorshipDetectors}, etc. the node can come out of it. Irrespective, the network receives the same information and is in a {\em consistent state}. To prevent eclipse attacks, we suggest efficient random selection from a list of the network's nodes. The UNL selection process can verify another server's community maintained track record~\cite{rippleTechTalk}. We address IP and receipt and its security implications. 

We use the relation between UNL overlap and consensus threshold~\cite{armknecht2015ripple}. The minimum threshold can be reduced with increase in UNL overlap. The consensus time decreases further. We can have $51\%$ threshold for a $98\%$ overlap. \footnote{Minimum $66\%$ threshold is valid in classical byzantine generals problem. We can have $> 50\%$ threshold with public key cryptography.} 

{\flushleft \vspace{-0.75mm} \bf Key arguments connecting overlap with information propagation (IP) and percentage byzantine nodes:} 

\begin{definition} \label{definition:blackboxsame} \vspace{-0.5mm} From the perspective of black box systems, two servers are said to be the same if they always receive the same input and generate the same output.
\end{definition}

{\vspace{-0.75mm} Nodes} take actions or decisions at the start or end of consolidative phases in Ripple and PBFT systems. Comparison between nodes is applicable at these checkpoints. Definition~\ref{definition:blackboxsame} can be loosened and 'same'ness measured at these checkpoints. In Ripple, it corresponds to (1) candidate set generation and (2) each consensus sub-round. This negates the difference between two nodes arising due to timing and ordering of messages and propagation within each consolidative phase. Definition~\ref{definition:blackboxsame} gets loosened to Definition~\ref{definition:practicallySameServers}.

\begin{definition}\label{definition:practicallySameServers} \vspace{-0.5mm} In the context of Ripple and PBFT systems: From a black box perspective and for all practical purposes, two servers are said to be the same if they receive the same input/ information (candidate sets and proposals) and produce the same output/ behave similarly (1. relay the same information and 2. send proposals (yes votes) to the same transactions during consensus sub-round voting) across each consolidative phase.
\end{definition} 

{\flushleft \vspace{-0.75mm} \em Same input:} Our approaches ensure $100\%$ IP  as demonstrated by Claim~\ref{claim:3hopClaim}, Corollary~\ref{corollary:100infoprop} and reinforced experimentally in Simulations. Thus the same information is received at all nodes across consolidative phases.

Note: Two transactions of an account with the same Account Sequence~\cite{ripple:accountSequence} arriving in the same phase is an attack. Then there would be no ordering. Owing to our approaches (Kelips-like overlay and MCA), both the transactions would be identified during candidate set generation and rejected at all nodes. A mechanism to report these can be created.

\begin{definition} \vspace{-0.5mm} Good (non-malicious) nodes are those running accurate implementations of the RPCA as same or different software versions.
\end{definition}

{\flushleft \vspace{-0.75mm} \em Same output:} Two good (non-malicious) nodes implement RPCA accurately and thus exhibit the same behaviour. They receive the same information across consolidative phases using our approaches (including our suggested wait times). They would produce the same output across these phases.

\begin{claim} \label{claim:goodNodeOverlap} \vspace{-0.5mm} Two nodes $\alpha$ and $\beta$ having two good (non-malicious) nodes A and B in their UNLs respectively have an overlap in A and B provided our approaches are used.
\end{claim}
\begin{proof} \vspace{-1.5mm} Assume two good (non-malicious) nodes A and B in UNLs of nodes $\alpha$ and $\beta$ respectively. Given our approach, B receives all messages received by A. Being good, it relays the same over as A. Being good and receiving the same messages as A it sends proposals (aka yes votes) (say, $P1$) to the same transactions (say, $T1$) as A (say, proposal $P2$ for $T1$). For all practical purposes and by Definition~\ref{definition:practicallySameServers}, B is the same as A. Thus $\alpha$ and $\beta$ have an overlap in A and B. 
\end{proof}

{\vspace{-1.5mm} We extrapolate} Claim~\ref{claim:goodNodeOverlap}. All nodes have the latest information. $\alpha$ has $f_{1}$ good nodes in its UNL of size $N_{1}$ and $\beta$ has $f_{2}$ good nodes in its UNL of size $N_{2}$. Overlap between $\alpha$ and $\beta$ is $\geq \min(\frac{f_{1}}{N_{1}},\frac{f_{2}}{N_{2}})$. Extrapolating for the network, the threshold for the absence of forks is $\leq 1-\frac{\min(\frac{f_{1}}{N_{1}},\frac{f_{2}}{N_{2}},\ldots,\frac{f_{n}}{N_{n}})}{2}$. $100 \%$ overlap and thresholds as low as $(50+\epsilon)\%$ ($\epsilon$ is a negligible positive value) is achieved if all nodes are good.

Thus {\em UNL overlap is a function of IP (candidate set and proposal) and the percent of non-byzantine nodes in UNLs.} Increase in propagation with time increases UNL overlap and decreases consensus thresholds~\cite{armknecht2015ripple}. {\em With $100\%$ IP, UNL overlap and consensus thresholds becomes purely a function of the percentage of non-byzantine nodes in UNLs.} 

{\flushleft \vspace{-1mm} \bf Implementation perspective:} A node should not share its UNL data including the nodes believed to be byzantine to prevent Eclipse and other attacks. Thus we can not find overlap and thresholds. This is a challenge. We leverage the fact that a track record of nodes' behaviour~\cite{rippleTechTalk} is expected to be maintained by the community or the node. Nodes can freely choose which records to trust. This is a loosely coupled system. Now a node can compute the 'reputation' or trust value of its UNL and each node in it. For $100\%$ IP, the overlap and thresholds depend only on the UNL's trust value.\footnote{This consideration is at each node as opposed to at the network}. Consensus thresholds can be modified at each node by setting a minimum trust value to be maintained.

{\flushleft \vspace{-1mm} \bf Other merits:} Our approach is more scalable (UNL size: $O(\sqrt{N})$) compared to hard overlap (UNL size: $O(N)$). We systematize UNL selection, improve IP and UNL overlap, making it more scalable. We suggest the use of {\em hard overlap for validation and our concept of overlap for consensus}. 

{\flushleft \vspace{-1mm} \bf Usage perspective:} Depend on the LVL for large transactions and on the LCL or the LVL for small amounts. This depends if one can wait for a few more seconds for the LVL.

\vspace{-2.5mm}
\subsection{Safety and Liveness}
{\vspace{-1mm} We now cover} safety and liveness properties. Ripple demonstrates liveness via its simulator~\cite{simCode}. It shows the system reaching consensus eventually. Our MCA enhances RPCA with IP beyond immediate neighbors. The UNL guidelines further it. There is no major change to the core protocol. Liveness as demonstrated by the simulator~\cite{simCode} holds. IP and consensus and thus liveness is achieved sooner.

We provide adequate overlap for agreement~\cite{armknecht2015ripple}. Agreement ensures safety from NPs, forks and DS Attacks. Malicious behaviours have been identified and solved. In simulations we simulate for safety towards {\bf(1)} correct consensus, and {\bf(2)} $100\%$ IP to all nodes. These are achieved much sooner and in more adverse scenarios with our approach. We ensure greater safety and sooner. {\em Thus we improve both safety and liveness.} 

\vspace{-2.5mm}
\section{Simulation} \label{sec:5}

\vspace{-1mm}
\subsection{Experimental setup}
{\vspace{-0.25mm} We} run a variety of seeded test cases for each simulation on a 256-node network. We build upon Ripple's~\cite{schwartz2014ripple} 'Sim.cpp'~\cite{simCode}. Each case begins with information originating at a random source node. All other nodes are unaware of it. We simulate information propagation {\bf (IP)} which is core to our paper and the base arguments in our security model (Section~\ref{subsec:formalAnalysis}). Propagation of only valid transactions occurs because an invalid transaction is not forwarded by other genuine nodes {\bf (GNs)} and it dies down\footnote{This is valid only for our approach SimK. We extend it to the current Ripple system SimC for comparison.}. A valid transaction is forwarded by all GNs receiving it. {\em Given the scope of our paper (IP) and our security model}, we consider vulnerabilities at a network layer. Thus malicious nodes attempt to throttle or drop transactions. We simulate the dropping of transactions, the worse out of message throttling and dropping. The simulation is at a transaction and proposal level\footnote{We simulate for 'a' transaction. Extendable to 'all' transactions.} Selective, partial or complete throttling of messages is still possible for other transactions. 

Another detail is that $80\%$ of a node's UNL has to agree on the transaction for consensus in SimC and SimRM. This threshold reduces over time (due to IP) for our approach SimK. The simulation verifies whether the system comes to a {\em right consensus} or if all GNs have received the information basis the mode chosen. Several success metrics (not achieving which is failure) are considered and discussed with simulation modes shortly here after.

We simulate for safety, making improvements over 'Sim.cpp'\cite{simCode} which has some flaws including: $\bullet$ The original starting condition of $50\%$ of the network voting in favor and $50\%$ against a transaction is not accurate. It does not represent {\em IP} that influences the starting state.
$\bullet$ State change at each node starts as soon as the node receives votes $=50\%$ of smallest UNL size.
$\bullet$ Consensus threshold is a property of a node. It is wrongly applied to the whole network for success.
$\bullet$ The simulator only proves that the system comes to a 'consensus' and not a right consensus.

We now describe the 3 different versions of the system used for comparison\footnote{[x] represents the mode and is always present. [y] represents percentageEclipsed and is present if applicable}: 
{\bf (1) SimC[x][y]}: Sim-Classic (or SimC) is the current state of RippleNet. It is made using 'Sim.cpp'. UNL size and selection procedure, number of links, link latencies and connections are same as 'Sim.cpp'.
{\bf (2) SimRM[x][y]}: Sim-RippleMid (or SimRM) is a modified version of SimC. UNL size and  number of links is slightly greater than SimK. UNL selection procedure, link latencies and connections is randomised as is in SimC. It represents a randomised topology with UNL size and number of links similar to SimK.
{\bf (3) SimK[x][y]}: Sim-Kelips or (SimK) has Kelips inspired UNL membership and network topology. The UNL, number of links, connections are as per our approach. Link latencies are same as SimC.

SimRM serves as a sanity check showing that the current UNL design is inferior to SimK despite having more links. Actual comparison should be between SimC and SimK.

\pgfplotsset{scale=0.5}
\begin{figure*}[!htb]
\centering
\includegraphics{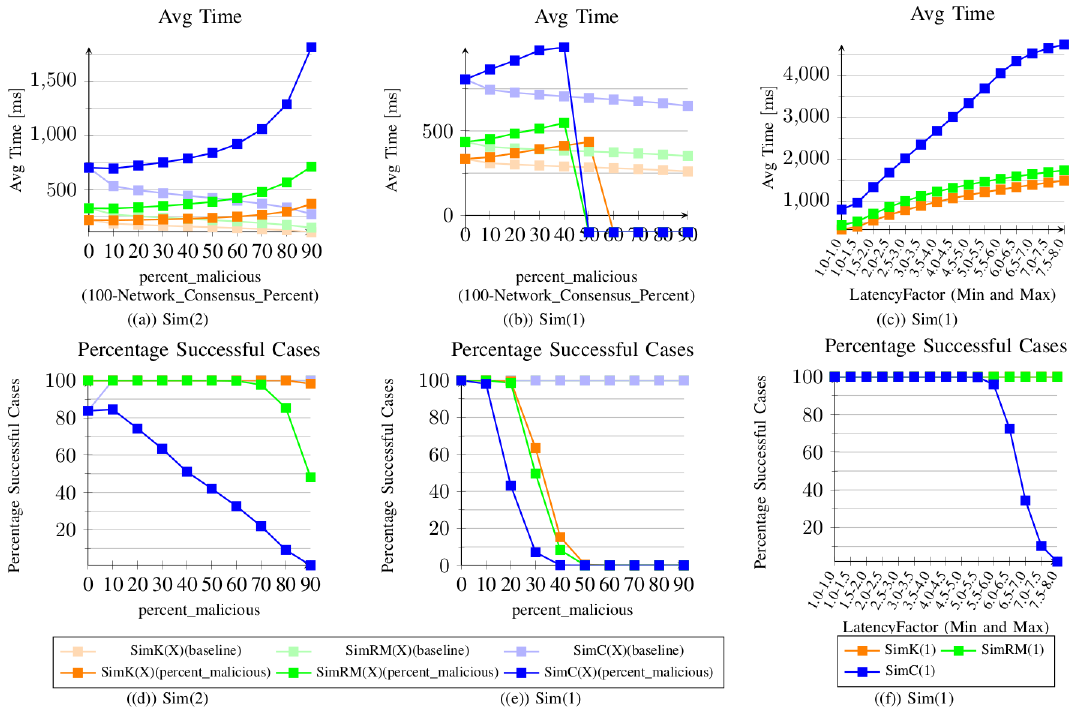}
\caption{\label{fig:degreeSeveritySetup}
((a)), ((d)) compare IP for base cases ($PM=0$) and PM cases in the absence of network issues for mode 2. 
((b)), ((e)) compares consensus for base cases ($PM=0$) and PM cases in the absence of network issues for mode 1.
((c)), ((f)) compare consensus for $PM=0$ , $PLANI=75$, $PNANI=100$, $NCP=100$ and mode 1.}
\vspace{-6mm}
\end{figure*}

\pgfplotsset{scale=0.5}
\begin{figure*}[!htb]
\centering
\subfloat[Sim(2)]{
\centering
\begin{tikzpicture}
\begin{axis}[
axis lines=left, 
legend style={
                at={(xlabel.south)}, 
                yshift=-1ex,
                anchor=north,
                font = \scriptsize
            },
ymajorgrids=true,
yminorgrids=true,
minor y tick num = 1,
title={Avg Time},
y SI prefix=milli,y unit=s,  
  xlabel= Severity,
  xlabel style = {font = \footnotesize, name=xlabel},
  ylabel=Avg Time,
  ylabel style = {font = \footnotesize, align=center},
  enlargelimits = false,
  xticklabels from table={Sim2-baselineSeverityComparison}{SeverityNum},xtick=data]
\addplot[colorLight3Orange,thick,mark=square*] table [y=SimK(2)(baseline)AvTime,x=SeverityNum]{Sim2-baselineSeverityComparison};
\addplot[colorLight3Green,thick,mark=square*] table [y=SimRM(2)(baseline)AvTime,x=SeverityNum]{Sim2-baselineSeverityComparison};
\addplot[colorLight3Blue,thick,mark=square*] table [y=SimC(2)(baseline)AvTime,x=SeverityNum
]{Sim2-baselineSeverityComparison};

\addplot[orange,thick,mark=square*] table [y=SimK(2)(severity)AvTime,x=SeverityNum]{Sim2-baselineSeverityComparison};
\addplot[green,thick,mark=square*] table [y=SimRM(2)(severity)AvTime,x=SeverityNum]{Sim2-baselineSeverityComparison};
\addplot[blue,thick,mark=square*] table [y=SimC(2)(severity)AvTime,x=SeverityNum
]{Sim2-baselineSeverityComparison};
\end{axis}
\end{tikzpicture}
\label{fig:III2466}}
\hfill
\subfloat[Sim(4) vs Sim(6)]{
\centering
\begin{tikzpicture}
\begin{axis}[
axis lines=left, 
legend style={
                at={(xlabel.south)}, 
                yshift=-1ex,
                anchor=north,
                font = \scriptsize
            },
ymajorgrids=true,
yminorgrids=true,
minor y tick num = 1,
title={Avg Time},
y SI prefix=milli,y unit=s,  
  xlabel= Severity,
  xlabel style = {font = \footnotesize, name=xlabel},
  ylabel=Avg Time,
  ylabel style = {font = \footnotesize, align=center},
  enlargelimits = false,
  xticklabels from table={Sim4vsSim6}{SeverityNum},xtick=data]
\addplot[colorLight3Orange,thick,mark=square*] table [y=SimK(4)AvTime,x=SeverityNum]{Sim4vsSim6};
\addplot[colorLight3Green,thick,mark=square*] table [y=SimRM(4)AvTime,x=SeverityNum]{Sim4vsSim6};
\addplot[colorLight3Blue,thick,mark=square*] table [y=SimC(4)AvTime,x=SeverityNum]{Sim4vsSim6};

\addplot[orange,thick,mark=square*] table [y=SimK(6)(100)AvTime,x=SeverityNum]{Sim4vsSim6};
\addplot[green,thick,mark=square*] table [y=SimRM(6)(100)AvTime,x=SeverityNum]{Sim4vsSim6};
\addplot[blue,thick,mark=square*] table [y=SimC(6)(100)AvTime,x=SeverityNum
]{Sim4vsSim6};
\end{axis}
\end{tikzpicture}
\label{fig:III2466}}
\hfill
\subfloat[Sim(6)]{
\centering
\begin{tikzpicture}
\begin{axis}[
axis lines=left, 
legend style={
                at={(xlabel.south)}, 
                yshift=-4ex,
                anchor=north,
                font = \scriptsize
            },
ymajorgrids=true,
yminorgrids=true,
minor y tick num = 1,
title={Avg Time},
y SI prefix=milli,y unit=s,  
  xlabel= Percentage Eclipsed,
  xlabel style = {font = \footnotesize, name=xlabel},
  ylabel=Avg Time,
  ylabel style = {font = \footnotesize, align=center},
  enlargelimits = false,
  xticklabels from table={Sim6-percentEclipsed}{percentageEclipsed},xtick=data]
\addplot[orange,thick,mark=square*] table [y=SimK(6)AvTime,x=percentageEclipsed]{Sim6-percentEclipsed};
\addplot[green,thick,mark=square*] table [y=SimRM(6)AvTime,x=percentageEclipsed]{Sim6-percentEclipsed};
\addplot[blue,thick,mark=square*] table [y=SimC(6)AvTime,x=percentageEclipsed]{Sim6-percentEclipsed};
\end{axis}
\end{tikzpicture}
\label{fig:III2466}}
\hfill
\subfloat[Sim(2)]{
\centering
\begin{tikzpicture}
\begin{axis}[
axis lines=left, 
legend style={
                at={(xlabel.south)}, 
                yshift=-1ex,
                anchor=north,
                font = \scriptsize
            },
ymajorgrids=true,
yminorgrids=true,
minor y tick num = 1,
title={Percentage Successful Cases},
  xlabel=Severity,
  xlabel style = {font = \footnotesize, name=xlabel},
  ylabel=Percentage Successful Cases,
  ylabel style = {font = \footnotesize, align=center},
  enlargelimits = false,
  xticklabels from table={Sim2-baselineSeverityComparison}{SeverityNum},xtick=data,
  legend columns=2]
\addplot[colorLight3Orange,thick,mark=square*] table [y=SimK(2)(baseline)PSC,x=SeverityNum]{Sim2-baselineSeverityComparison};
\addlegendentry{SimK(2)(base)}
\addplot[colorLight3Green,thick,mark=square*] table [y=SimRM(2)(baseline)PSC,x=SeverityNum]{Sim2-baselineSeverityComparison};
\addlegendentry{SimRM(2)(base)}
\addplot[colorLight3Blue,thick,mark=square*] table [y=SimC(2)(baseline)PSC,x=SeverityNum]{Sim2-baselineSeverityComparison};
\addlegendentry{SimC(2)(base)}

\addplot[orange,thick,mark=square*] table [y=SimK(2)(severity)PSC,x=SeverityNum]{Sim2-baselineSeverityComparison};
\addlegendentry{SimK(2)(sev)}
\addplot[green,thick,mark=square*] table [y=SimRM(2)(severity)PSC,x=SeverityNum]{Sim2-baselineSeverityComparison};
\addlegendentry{SimRM(2)(sev)}
\addplot[blue,thick,mark=square*] table [y=SimC(2)(severity)PSC,x=SeverityNum]{Sim2-baselineSeverityComparison};
\addlegendentry{SimC(2)(sev)}
\end{axis}
\end{tikzpicture}
\label{fig:III2466}}
\hfill
\subfloat[Sim(4) vs Sim(6)]{
\centering
\begin{tikzpicture}
\begin{axis}[
axis lines=left, 
legend style={
                at={(xlabel.south)}, 
                yshift=-1ex,
                anchor=north,
                font = \scriptsize
            },
ymajorgrids=true,
yminorgrids=true,
minor y tick num = 1,
title={Percentage Successful Cases},
  xlabel=Severity,
  xlabel style = {font = \footnotesize, name=xlabel},
  ylabel=Percentage Successful Cases,
  ylabel style = {font = \footnotesize, align=center},
  enlargelimits = false,
  xticklabels from table={Sim4vsSim6}{SeverityNum},xtick=data,
  legend columns=2]
\addplot[colorLight3Orange,thick,mark=square*] table [y=SimK(4)PSC,x=SeverityNum]{Sim4vsSim6};
\addlegendentry{SimK(4)}
\addplot[colorLight3Green,thick,mark=square*] table [y=SimRM(4)PSC,x=SeverityNum]{Sim4vsSim6};
\addlegendentry{SimRM(4)}
\addplot[colorLight3Blue,thick,mark=square*] table [y=SimC(4)PSC,x=SeverityNum]{Sim4vsSim6};
\addlegendentry{SimC(4)}

\addplot[orange,thick,mark=square*] table [y=SimK(6)(100)PSC,x=SeverityNum]{Sim4vsSim6};
\addlegendentry{SimK(6)(100)}
\addplot[green,thick,mark=square*] table [y=SimRM(6)(100)PSC,x=SeverityNum]{Sim4vsSim6};
\addlegendentry{SimRM(6)(100)}
\addplot[blue,thick,mark=square*] table [y=SimC(6)(100)PSC,x=SeverityNum]{Sim4vsSim6};
\addlegendentry{SimC(6)(100)}
\end{axis}
\end{tikzpicture}
\label{fig:III2466}}
\hfill
\subfloat[Sim(6)]{
\centering
\begin{tikzpicture}
\begin{axis}[
axis lines=left, 
legend style={
                at={(xlabel.south)}, 
                yshift=-1ex,
                anchor=north,
                font = \scriptsize
            },
ymajorgrids=true,
yminorgrids=true,
minor y tick num = 1,
title={Percentage Successful Cases},
  xlabel=Percentage Eclipsed,
  xlabel style = {font = \footnotesize, name=xlabel},
  ylabel=Percentage Successful Cases,
  ylabel style = {font = \footnotesize, align=center},
  enlargelimits = false,
  xticklabels from table={Sim6-percentEclipsed}{percentageEclipsed},xtick=data]
\addplot[orange,thick,mark=square*] table [y=SimK(6)PSC,x=percentageEclipsed]{Sim6-percentEclipsed};
\addlegendentry{SimK(6)}
\addplot[green,thick,mark=square*] table [y=SimRM(6)PSC,x=percentageEclipsed]{Sim6-percentEclipsed};
\addlegendentry{SimRM(6)}
\addplot[blue,thick,mark=square*] table [y=SimC(6)PSC,x=percentageEclipsed]{Sim6-percentEclipsed};
\addlegendentry{SimC(6)}
\end{axis}
\end{tikzpicture}
\label{fig:III2466}}
\caption{\label{fig:III2466}  ((a)), ((d)) compare IP for base cases ($PM=0$ , Ideal condition, NCP corresponding to comparison cases' NCP) and cases of corresponding severity for mode 2.
((b)), ((e)) compare for modes 4 and 6 across varying severity.
((c)), ((f)): compare for mode 6 across varying PE for Real World severity.}
\vspace{-6mm}
\end{figure*}
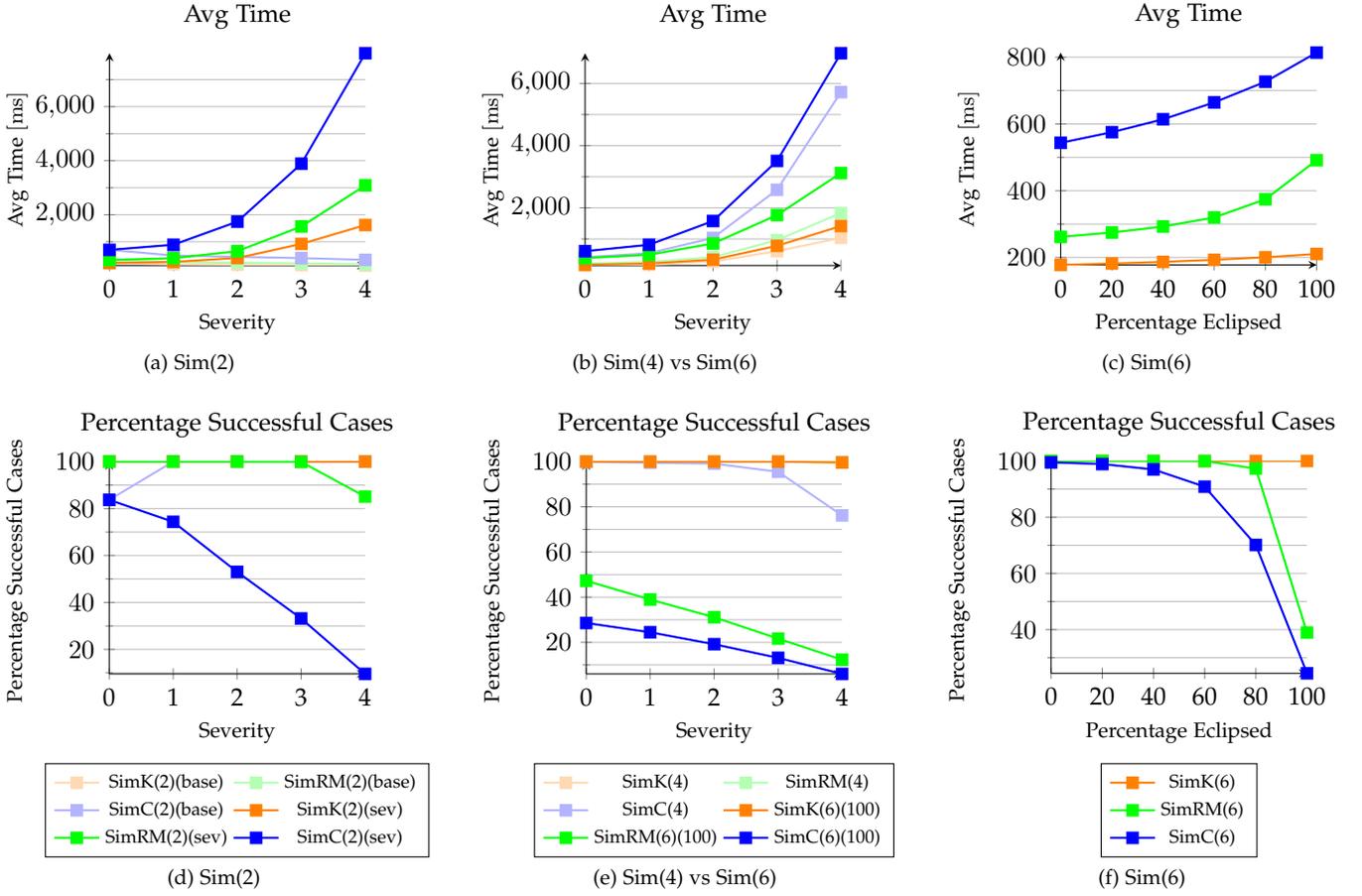

We have various configurables across different simulations:
$\bullet$ \textbf{mode}: The mode in which simulation is being run. The different modes are: 
{(a) \bf mode 1} measures consensus at a network level. Success is achieved when the no. of nodes arriving at the right consensus is $\geq$ min(GNs, no. of nodes corresponding to Network\_Consensus\_Percent [aka NCP]).
{(b) \bf mode 2}  measures IP at a network level. Success is achieved when the no. of nodes that have received information $\geq$ min(GNs, no. of nodes corresponding to NCP).
{(c) \bf mode 3}  measures consensus at a network level and at a random target node. Success metrics include those for mode 1 and the target node coming to consensus.
{(d) \bf mode 4} measures IP to a random target node. 
{(e) \bf mode 5} extends mode 3 with the target node partially eclipsed by malicious nodes in its UNL. Success metric same as mode 3.
{(f) \bf mode 6} extends mode 4 with the target node partially eclipsed by malicious nodes in its UNL.
{(g) \bf mode 7} measures the maximum of shortest distance (MSD) nodes have from the source node receiving the transaction, when a particular target node is partially eclipsed by malicious nodes upto 'percentageEclipsed' nodes in its links (connections). $Success_1$ is achieved when all the GNs have received information by the time  message passing has ended. $Success_2$ is achieved when all the GNs have received information by the time message passing has ended and MSD $\leq 3$.
$\bullet$ \textbf{Network\_Consensus\_Percent (NCP)}: The network wide percentage which is one of the factors of a  minimum function to be achieved for achieving success. Not valid for modes 4, 6 and 7.
$\bullet$ \textbf{percentage\_malicious (PM)}: Percentage of nodes in the network that are malicious and distributed in a random fashion. It does not include nodes eclipsing target nodes.
$\bullet$ \textbf{outbound\_links\_to\_node\_ratio (OLTNR)}: Ratio representing the number of outbound links per node to the number of nodes. Applicable only for SimC. Unless otherwise mentioned,  we consider it to be $10/256$ for simulations. This is higher than the current\footnote{Software version: rippled-1.5.0-rc3} ratio $15/1024$ in a non-randomised topology. As the OLTNR decreases, the graph becomes sparser, and the performance of SimC reduces (as the quality of IP decreases). The chances of failure in consensus increases. Thus, the actual performance of RippleNet in a randomised topology is worse than the performance in simulations. 
$\bullet$ \textbf{minLatencyFactorForNI (minLFNI) and maxLatencyFactorForNI (maxLFNI)}: Minimum (inclusive) and maximum (exclusive) factors  affecting latencies of links affected by Network Issues. The factors are over a uniform real number distribution. 
$\bullet$ \textbf{percentNodesAffectedByNI (PNANI)}: Percentage of nodes affected by Network Issues.
$\bullet$ \textbf{percentLinksAffectedByNI (PLANI)}: Percentage of links affected for each node affected by Network Issues. 
$\bullet$ \textbf{percentageEclipsed (PE)}: The percentage of malicious nodes on the other side of links or in the UNL\footnote{Eclipse node via Malicious UNL for modes 5 and 6, and Eclipse node via Malicious Links for mode 7} (whichever is applicable) of the test node. (Applicable for modes 5, 6 and 7)
$\bullet$ \textbf{seedMax}: The number of seeded cases being run. Unless otherwise mentioned seedMax is $1500$ for odd modes and $5000$ for even modes.  
$\bullet$ \textbf{UNL\_B\_PER\_AFFINITY\_SUBGROUP\_SIZE}: Variable c in UNL-B size as in our paper. For simulations $c=2$ .

{\flushleft \vspace{-0.75mm} \bf Note:} While interpreting the graphs, all values have a natural and non-negative expected range. {\bf The presence of negative value for any particular data point is an indicator that all test cases associated with that data point have ended in failure.} Purposefully incorporated, it shows a distinction between low or 0 values and an all failure case. 

Also, malicious nodes do not add the transactions under consideration in their LCL and do not declare consensus on them in our simulations. Only GNs come to a consensus. If NCP $>$ actual $\%$ GN, NCP is set as the actual $\%$ GN.

It is possible that the percentage of malicious nodes in UNL are $>$ PM nodes in the network. As conditions get severe there can be breakages in the network.  Malicious node configurations may become such that a GN gets $100\%$ eclipsed for IP, or $>50\%$ for consensus or some other issues. In such a situation it may not be possible to have IP and consensus at $100\%$ of the nodes.  Due to these constraints, SimK (though better than SimC) is not able to achieve $100\%$ success in consensus in extremely poor conditions.  

\vspace{-2.5mm}
\subsection{Results and interpretations} \label{Insights1}

\pgfplotsset{scale=0.55}
\begin{figure*}[!htb]
\centering
\includegraphics{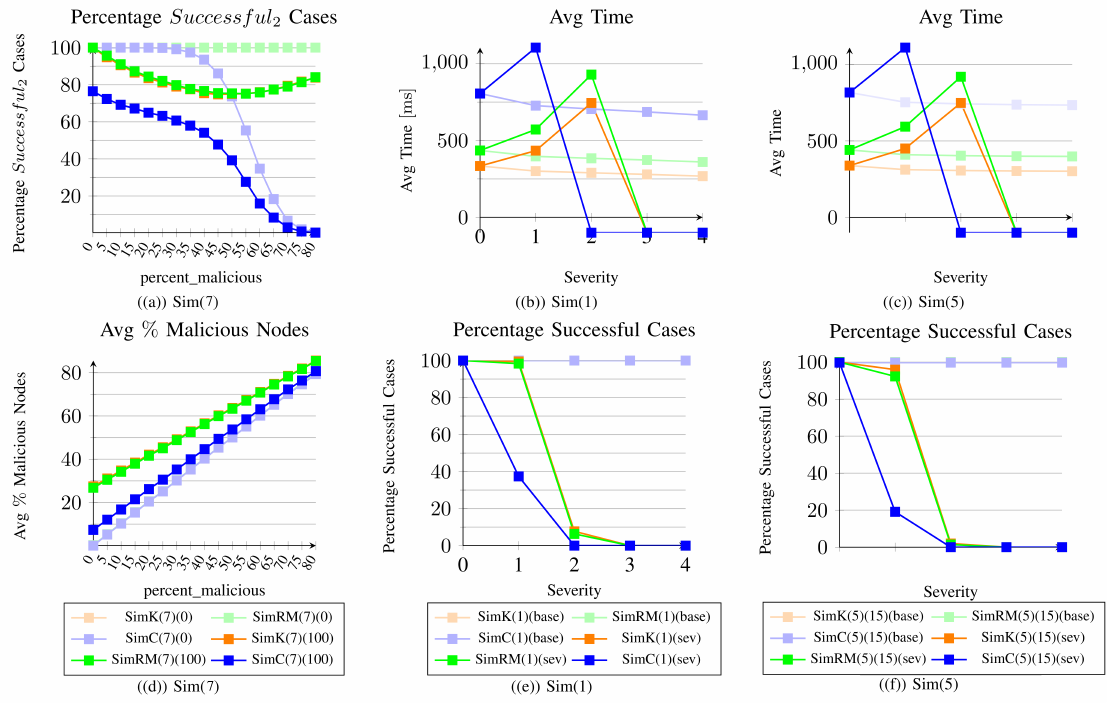}
\caption{\label{fig:ICC815}((a)), ((d)) compare IP for mode 7.
((b)), ((e)) compare consensus for base cases ($PM=0$, Ideal condition, NCP corresponding to comparison cases' NCP) and cases of corresponding severity for mode 1.
((c)), ((f)): compare consensus for base cases ($PM=0$, Ideal condition, NCP corresponding to comparison cases' NCP) and cases of corresponding severity for mode 5(15).}
\vspace{-6mm}
\end{figure*}

\begin{table}[!tb]
\caption{\label{table:severityFactors} Factors associated with varying degrees of severity}
\centering{}%
\begin{tabular}{|c|c|c|c|c|c|}
\hline 
\begin{tabular}{@{}c@{}}Severity\\(Severity Num)\end{tabular}
& \begin{tabular}{@{}c@{}}\\Ideal\\(0)\end{tabular}
& \begin{tabular}{@{}c@{}}Real\\World\\(1)\end{tabular}
& \begin{tabular}{@{}c@{}}\\Mild\\(2)\end{tabular}
& \begin{tabular}{@{}c@{}}Moderate-\\Severe\\(3)\end{tabular}
& \begin{tabular}{@{}c@{}}Very\\Severe\\(4)\end{tabular}  
\tabularnewline
\hline 
\begin{tabular}{@{}c@{}} $\%$ Malicious\\ Nodes\end{tabular}  & $0$ & $20$  & $40$ & $60$ & $80$ \tabularnewline
\hline
\begin{tabular}{@{}c@{}}minLFNI\end{tabular}  & $0$ & $1.5$  & $3$ & $4.5$ & $6.5$ \tabularnewline
\hline
\begin{tabular}{@{}c@{}}maxLFNI\end{tabular}  & $0$ & $2$  & $3.5$ & $5$ & $7$ \tabularnewline
\hline
\begin{tabular}{@{}c@{}}PLANI\end{tabular}  & $0$ & $25$  & $50$ & $75$ & $75$ \tabularnewline
\hline
\begin{tabular}{@{}c@{}}PNANI\end{tabular}  & $100$ & $100$ & $100$ & $100$ & $100$ \tabularnewline
\hline
\begin{tabular}{@{}c@{}}NCP ($\%$GNs)\end{tabular}  & $100$ & $80$ & $60$ & $40$ & $20$ \tabularnewline
\hline
\end{tabular}
\vspace{-6mm}
\end{table}

{\flushleft \vspace{-0.75mm} \bf Degrees of severity:} Degrees of severity are chosen basis simulations shown in Figure~\ref{fig:degreeSeveritySetup}, observations of real world statistics~\cite{wonderNetworkPings} and are tabulated in Table~\ref{table:severityFactors}.

{\flushleft \vspace{-0.75mm} \bf Information Propagation:} Figures \ref{fig:degreeSeveritySetup}, \ref{fig:III2466} provide a comparative view of average time and percentage successful cases. They cover modes 2, 4 and 6 associated with IP, under different cases. 
{\em SimK consistently performs better than SimRM and SimC}. 
By the time 20$\%$ of RippleNet receives information 100$\%$ of our system receives information.
Every case of IP delay or failure, is an opportunity for creating a fork, or for censoring a transaction. {\em Thus, SimK improves security along with transaction speed.}

Figure~\ref{fig:ICC815} provides a comparative view of percentage $successful_2$ cases and average percentage malicious nodes with  mode 7. SimK performs significantly better than SimC. The PM nodes is very slightly higher for SimK as compared to SimRM. Thus SimK's performance is very slightly poor. However, since there are several possible and faster paths in SimK, it performs better as demonstrated in other simulations associated with IP.

{\flushleft \vspace{-0.75mm} \bf Consensus:} Figures \ref{fig:degreeSeveritySetup}, \ref{fig:ICC815} provide a comparative view of average time and percentage successful cases. They cover modes 1, 3 and 5 associated with consensus, under different cases. SimK consistently performs better than SimRM and SimC. Every case of consensus failure is a possibility of genuine transactions getting stuck and forcing failed retries. This consumes resources and reduces the utility and adoptability of the system. There is lesser possibility of consensus failure in our approach SimK over SimC. Thus there are lesser opportunities for an attacker to censor or affect the system.

{\flushleft \vspace{-0.75mm} \bf Protection against Double Spend (DS) Attacks and Censorship:} Figure \ref{fig:degreeSeveritySetup} (d) provides a comparative view for IP. Basis Claim~\ref{claim:DoubleSpendAttack} we assume that cases where $100\%$ IP is not achieved as cases where attackers can effect DS Attacks or censor the network. We look at figure \ref{fig:degreeSeveritySetup} (d) and combine the results to see a cumulative result. It is observed that there are openings for attackers to launch DS Attacks or censor the system a total $53.62\%$, $6.92\%$ and $0.178\%$ times out of $50000$ simulated cases each while varying the configurations and PM nodes for SimC, SimRM and SimK respectively. Given that we have given performance boosts to SimC and SimRM in simulations over the existing systems, they are expected to perform far worse. {\em Thus we get $\geq 99.67\%$ reduction in opportunities for DS attacks and censorship.}

{\flushleft \vspace{-0.75mm} \bf Tolerance to faults and malicious nodes:}
The edge case (current limits or lower bounds) of the existing Ripple system is at $20\%$ malicious nodes. The same success rate of consensus as observed at this edge case is observed at $34.21\%$ malicious nodes using our approaches. As before we account for extra performance boost provided to SimC. {\em This implies $1.71x$ increase in fault tolerance from $20\%$ to $\geq 34.21\%$.}

\section{Limitations and Future Work}\label{sec:futureWork}
$\bullet$ A methodology to build and maintain a dynamic UNL while avoiding and preventing various attack vectors via multiple avenues including randomisation and membership algorithms for UNLs, detection is out of scope of this paper and the subject of our ongoing research.

$\bullet$ A methodology to calculate and maintain the trust value of UNLs at a node and network level, and reputation management in a loosely coupled system is out of scope of this paper and the subject of our ongoing research.

$\bullet$ A methodology to identify if a malicious node/ account is sending two different sets of messages to different nodes is the subject of our ongoing research.

$\bullet$ Optimisations to reduce chattiness in the context of our modified consensus algorithm can be further looked upon while implementing it and are not covered.

\section{Conclusion} \label{sec:conclusion}
Our approaches and design are inherently generic and are applicable to other cryptocurrencies and systems. Our paper explores and opens concepts associated with systematising the UNL and its overlap, implementation of efficient message propagation and consensus. It further improves the RPCA, and introduces consensus thresholds as a function of information propagation and percentage of non-byzantine nodes in the UNL. It brings in protection against attacks and vulnerabilities. It thus has potential to bring about benefits such as reduced consensus thresholds, increased speed, improved and systematized security, resilience to Double Spend Attacks, Censorship and other byzantine attacks, last mile connectivity. It assures 2 degrees of separation (3 hops). These improvements ensure promotion of a more energy efficient consensus over Proof of Work.


\printbibliography
\clearpage

\appendices

\section{Abbreviations}

\subsection{Frequently used abbreviations:}

\begin{itemize}
    \item BGP: Byzantine General's Problem
    \item DS: Double Spend
    \item IP: Information Propagation
    \item MCA: Modified Consensus Algorithm
    \item NP: Network Partition
    \item RPCA: Ripple Protocol Consensus Algorithm
    \item UNL: Unique Node List
\end{itemize}

\subsection{Infrequently used abbreviations}

\begin{itemize}
    \item DHT: Distributed Hash Table
    \item DNS: Domain Name Server
    \item P2P: Peer-to-peer
    \item TNL: Trustee Node List
    
\end{itemize}

\subsection{Abbreviations in simulations:}

\begin{itemize}
    \item GN: Genuine Node
    \item maxLFNI: maxLatencyFactorForNI
    \item minLFNI: minLatencyFactorForNI
    \item MSD: maximum of shortest distance
    \item NCP: Network\_Consensus\_Percent
    \item OLTNR: outbound\_links\_to\_node\_ratio
    \item PE: percentageEclipsed
    \item PM: percentage\_malicious
    \item PLANI: percentLinksAffectedByNI
    \item PNANI: percentNodesAffectedByNI
\end{itemize} 

\section{Goals and Challenges of Distributed Payment Systems} \label{sec:2}
Owing to their distributed nature, blockchains also suffer from the challenge of being able to achieve only two of the three metrics of an impossible trinity of scale, security, and decentralization (derivative of the CAP theorem~\cite{brewer2000CAP}).

In this paper we work to improve scale and security at no or negligible cost to decentralisation. While the introduction of introducers negligibly or not at all affect decentralisation in a negative fashion, the automated UNL ensures decentralisation whilst affecting it positively. 

{\flushleft \bf Goals of Distributed Payments Systems}
Some of the goals and areas of work in RTGSs, Distributed Payment Systems and allied systems (in general), needed towards their widespread adoption and success include improvement of utility by 
\begin{enumerate}[leftmargin=*]
    \item improving the transaction speed without compromising on security (with the same or similar levels of certainty) 
    \item having in place highly resilient and provably secure systems with reasonably comfortable consensus and safety thresholds that are not prone to network partitioning and blockchain forks.
    \item improving last mile connectivity (thus being able to conduct transactions even in relatively poor connectivity areas).
\end{enumerate}

{\flushleft \bf The Hazards of Network Partitions and Blockchain Forks}
Network partitions and blockchain forks are detrimental to  systems because forks detected and resolved post the completion of  transactions open the system to Double Spend attacks (illegitimate transactions). It is  often resolved by hard forks with legitimate transactions on the discarded forks being invalidated.

Failure to achieve the consensus threshold's lower bounds results in invalidation and rejection of the transaction. Some of the reasons for this failure include:
        \begin{enumerate}
            \item Not enough trusted members participating in consensus agreeing (voting positively) on the transaction as they did not deem it valid.
            \item Non-receipt of transactions due to network issues (latency, DDoS and Eclipse attacks, flux and churn, message throttling, etc), which can lead to network partitioning, blockchain forks and double spend attacks.
        \end{enumerate}
Rejection of transactions and thus prevention of the blockchain's forward movement in a potentially unhealthy network situation prevents potential blockchain forks and double spend attacks. However this contains a flaw -- forward movement is prevented merely on the possibility or suspicion of a network partition and not just when the network truly undergoes partitioning. It thus affects the normal functioning and brings to a halt all legitimate transactions and financial activities, dependent systems which have adopted and integrated with it, partially or fully. Thus it has a negative impact on real life human activities and though relatively safer, is still a loss making situation in itself.


The above mentioned challenges undermine systems open to them by undermining their reliability and the trust in them, questioning their very veracity. They raise adoption and viability concerns causing potential loss of new and existing members and reduction in transaction volume. One should thus be mindful of the above challenges and various security considerations and also ensure that normal operations and activity is not impacted  by false positives, while designing systems. Engineering and research work going on to make systems secure, fast, fault and partition tolerant thus is of importance and needed to ensure trust and widespread adoption. 

{\flushleft \bf Solving for Distributed Payments Systems, and in general}
As one can see, the absence of an appropriate consensus process and the possibility of network partitioning or a delay in information propagation can be extremely detrimental. There is a need for 
\begin{itemize}
\item a consensus algorithm that prevents such vulnerabilities
\item an information propagation mechanism that ensures malicious nodes are  unable to prevent the network from making forward progress and unable to cause network partitioning and forks in the system.
\end{itemize}

In our work, we have tried to fulfill these needs and gaps and worked towards achieving the goals associated with distributed payments systems mentioned above. We present guidelines to implement and set up Ripple's UNL, introduce newer lists and constructs, setting a base network overlay structure for assured overlap and efficient information propagation (in generic and randomised network topology). We also outline various node behaviours, present a modified consensus algorithm. We back this with sound analysis and experimental data and bring about improvements as outlined in the previous section and also later in the paper, including in areas such as performance, security, utility, thereby impacting adoption. We also provide methods to automate UNL updation.

It may be noted that since the network topology on top of which the UNL and other constructs is implemented is generic per se, the UNL and other construct's implementation too is generic. Our work is applicable not just to Ripple, a representative cryptocurrency and distributed payments system but to all other RTGSs, distrbuted payments systems, cryptocurrencies and to some problems faced as a part of the Byzantine Generals Problem

\section{Modified Consensus Algorithm}

The duration $y$ for Sub-round C of Candidate Set Generation phase (discussed earlier in the paper) includes time for (1) min 3 hops\footnote{In this paper, we imply 'hops' to be vertex hops.} of IP in our overlay and (2) distributed nature of consensus (factor of 2). Thus $y \geq 6 \times x$.

\section{The mechanics and percentage of information propagation} \label{MechAndPercentInfoProp}
As part of this subsection, we represent assured propagation routes from a source node to destination node in the same affinity group. More routes, similar or otherwise might be possible. For source and destination nodes in separate affinity groups, a similar approach can be followed.

As shown in figure~\ref{fig:1Hop}, the information propagates to all nodes in the source node's UNL. $\frac{(c+1)\times (\sqrt{N}-1) \times 100}{N} \%$ of the network receives the information as a best case scenario, including the destination node. \footnote{These $(c+1)\times (\sqrt{N}-1)$ nodes are the single biggest vulnerabilities for dropping messages or throttling any information to and fro nodes. As long as information reaches any one of these nodes, information is likely to reach the rest of the network as shown later}

\begin{figure}[!h]
  \centering
  \includegraphics[scale=0.4]{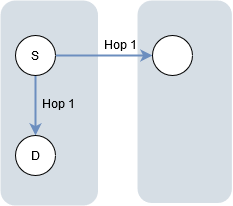}
  \caption{\label{fig:1Hop}Propagation in 1 Hop: Source and Destination in Same Affinity Group}
\end{figure}

\begin{figure}[H]
  \centering
  \includegraphics[scale=0.3]{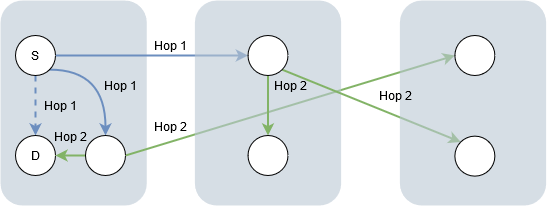}
  \caption{\label{fig:2Hop}Propagation in 2 Hops: Source and Destination in Same Affinity Group}
\end{figure}

In case the destination node does not receive the information directly in one hop, it receives the same via two hops through all other nodes in its affinity group, as shown in figure \ref{fig:2Hop}. These are assured mechanisms. It might be possible that the source and destination node share common nodes as connections in other affinity groups. As a best case scenario, $100 \%$ of nodes in the network receive the information within two hops. 

\begin{figure}[H]
  \centering
  \includegraphics[scale=0.3]{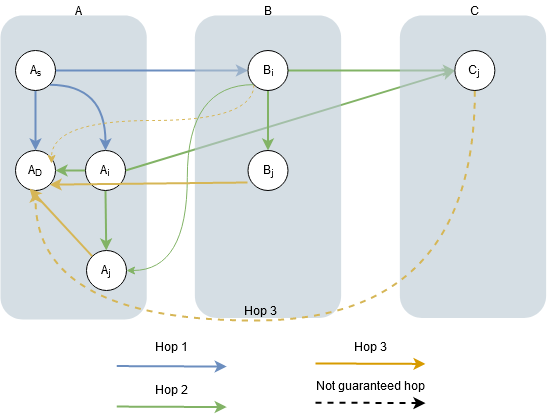}
  \caption{\label{fig:SandDSameAffinityGroup}Propagation in 3 Hops: Source and Destination in Same Affinity Group}
\end{figure}

In case the destination node still does not receive information within two hops, it receives the same in three hops via the routes depicted in the figure \ref{fig:SandDSameAffinityGroup} and other routes. It is guaranteed that $100 \%$ of the network receives the information in the absence of $\geq (c+1) \times (\sqrt{N}-1)$ failures.

\begin{figure}[H]
  \centering
  \includegraphics[scale=0.3]{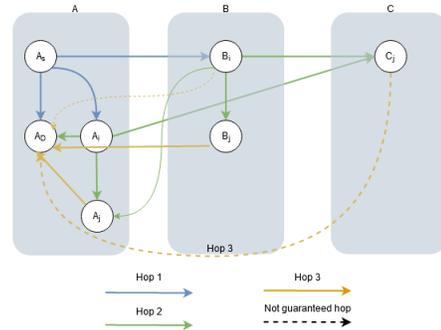}
  \caption{\label{fig:SandDSeparateAffinityGroup}Source and Destination in Separate Affinity Groups}
\end{figure}

This ensures a distance of $3$ hops (maximum $2$ degrees of separation) between any two nodes in the absence of $\geq (c+1) \times (\sqrt{N}-1)$ failures. For the presence of $\geq (c+1) \times (\sqrt{N}-1)$ failures, a node is either uneclipsed, partially eclipsed, fully eclipsed. If a node is uneclipsed or partially eclipsed, it is still able to send and receive all information from the network including the last validated ledger. If a node is fully eclipsed and it is no longer a part of the network and not participating in the consensus process and not associated with any genuine nodes. A node admin should identify such a situation (existing means are censorship detectors~\cite{censorshipDetectors}) and rectify it.

\section{Simulation}
{\flushleft \bf Other configurables} other than those mentioned earlier include: 
$\bullet$ \textbf{UNLA\_LLF\_MAX, UNLB\_LLF\_MAX}: The maximum link latency factor\footnote{Latencies of links and thus the ratio of link latencies for links within and outside affinity groups can be varied by a factor equal to the link latency factor. Specific link latency factors are represented by $\{k,l\}$ with $k$ and $l$ being link latency factor for UNLA and UNLB respectively. Applicable for SimRB and SimK. A similar nomenclature if used for SimC is applicable only $k=l$.}(LLF) for links associated with UNLA and UNLB for SimK and a similar association for links in SimRB. All associated permutations computed using these levers.
$\bullet$ \textbf{upperLimitMalicious}: The upper limit (exclusive) of the number of malicious nodes in the system. Corresponds to $(c+1)\times(\sqrt{N}-1)$. Only applicable for mode 7.
$\bullet$ \textbf{isUpperLimitMaliciousApplicable}: Boolean value ensuring applicability of upperLimitMalicious. Only applicable for mode = 8.
$\bullet$ \textbf{num\_nodes}: Number of nodes in the network. Unless otherwise mentioned, num\_nodes is 256.

{\flushleft \bf Comparing characteristics at Ripple's limits to our system's limits:}
The edge case (current limits or lower bounds) of the existing Ripple system is at $20\%$ malicious nodes. At configurations of $20\%$ malicious nodes SimC, SimRM and SimK are able to come to a right consensus $43.13\%$, $98.6\%$ and $99.53\%$ times out of 1500 simulated cases respectively. SimK is $2.423x$ faster than SimC at achieving at a right consensus at this edge case.

{\flushleft \bf Other Key Results:} Implementing SimK results in lesser time and messages. The complexity of sent SimK messages is a maximum $2x$ SimC and lesser than SimRB. However, considering received messages, SimK's success in significantly lesser received messages than SimC and SimRB. Thus, SimK has higher order of success.
There are an average 5120, 18240 and 18432 links in SimC, SimK and SimRM respectively.

\section{Additional Graphs}

{\flushleft \bf Information Propagation}
Figures \ref{fig:allsimMode2PercentMaliciousComparison}, \ref{fig:allsimMode2SeverityComparison}, \ref{fig:allsimMode46SeverityComparison}, \ref{fig:allsimMode6PercentageEclipsedComparison}, \ref{fig:allsimMode8PercentMaliciousComparison}

{\flushleft \bf Consensus}
Figures \ref{fig:allsimMode1SeverityComparison}, \ref{fig:allsimMode1LatencyComparison}, \ref{fig:allsimMode1LinkLatencyComparison}, \ref{fig:allsimMode1PercentMaliciousComparison}, \ref{fig:allsimMode5SeverityComparison}

\pgfplotsset{scale=0.55}
\begin{figure*}[!htbp]
\centering
\subfloat{
\centering

}
\caption{\label{fig:allsimMode5SeverityComparison}(Consensus) Comparison of Avg Time and Percent Successful Cases for baseline cases ($percent\_malicious=0$, Ideal condition, Network\_Consensus\_Percent corresponding to comparison cases' Network\_Consensus\_Percent) and cases of corresponding severity for mode 5}
\end{figure*}

\clearpage

\end{document}